\documentclass[format=acmsmall, review=false]{acmart}
\usepackage{acm-ec-26}
\usepackage{booktabs} % For formal tables
\usepackage[ruled]{algorithm2e} % For algorithms

\SetAlFnt{\small}
\SetAlCapFnt{\small}
\SetAlCapNameFnt{\small}
\SetAlCapHSkip{0pt}
\IncMargin{-\parindent}

% AUTHOR ADDED START
\usepackage{multirow}

\usepackage{pgfplots}
\usepackage{subcaption}
\pgfplotsset{compat=1.18}

\definecolor{acmDarkBlue}{RGB}{0,114,178}
\definecolor{acmGreen}{RGB}{0,158,115}
\definecolor{acmPink}{RGB}{204,121,167}
\definecolor{acmOrange}{RGB}{213,94,0}
\definecolor{acmYellow}{RGB}{240,228,66}
\definecolor{acmLightBlue}{RGB}{86,180,233}

\newcommand{\new}{\textcolor{black}}
\newcommand{\nd}{\textcolor{black}}

\pgfplotsset{
    cycle list={
        {acmDarkBlue, mark=*},
        {acmGreen, mark=square*},
        {acmPink, mark=triangle*},
        {acmOrange, mark=diamond*},
        {acmYellow, mark=o},
        {acmLightBlue, mark=star}
    }
}

\usepackage{thm-restate}

\newcommand{\wpopOpt}{\textsc{max-weak-pm}}

\newcommand{\spopex}{\textsc{super-pm}}

\newcommand{\cstab}{$\gamma$-min stable}
\newcommand{\cgblocks}{$\gamma$-min blocks}
\newcommand{\cgstab}{$\gamma$-min stable}

\newcommand{\cgblocking}{$\gamma$-min blocking}

\newcommand{\cpopOPT}{\textsc{max-$\gamma$-pm}}

\newcommand{\spop}{super popular}
\newcommand{\wpop}{weakly popular}

\newcommand{\cpop}{$\gamma$-popular}

\newcommand{\votec}{\mathrm{vote}^{\gamma}}
\newcommand{\vote}{\mathrm{vote}}

\newcommand{\Dvote}{\Delta}
\newcommand{\Dvotec}{\Delta^{\gamma}}

\newcommand{\votew}{\mathrm{vote}^{weak}}
\newcommand{\votes}{\mathrm{vote}^{super}}
\newcommand{\Dvotes}{\Delta^{super}}
\newcommand{\Dvotew}{\Delta^{weak}}
\newcommand{\maxsmti}{\textsc{max-smti}}

\newcommand{\smti}{\textsc{max-smti}}

\usepackage{boxedminipage}
\usepackage{xspace}

\newtheorem{theorem}{Theorem}

\newtheorem{claim}{Claim}
\newtheorem{prop}{Proposition}
\newtheorem{example}[theorem]{Example}

\newenvironment{claimproof}{\par\noindent\underline{Proof:}}{\leavevmode\unskip\penalty9999 \hbox{}\nobreak\hfill\quad\hbox{$\blacksquare$}}

\newcommand{\pbDef}[3]{%
\noindent
\begin{center}
\begin{boxedminipage}{0.98\columnwidth}
#1\\[5pt]
\begin{tabular}{p{0.14\columnwidth}p{0.8\columnwidth}}
Input: & #2\\
Output: & #3
\end{tabular}
\end{boxedminipage}
\end{center}
}

% AUTHOR ADDED END

% Choose a citation style by commenting/uncommenting the appropriate line:
%\setcitestyle{acmnumeric}
\setcitestyle{authoryear}

% Title. Note the optional short title for running heads. In the interest of anonymisation, please do not include any acknowledgements.
\title[Weakly Popular and Super Popular Matchings]{Weakly Popular and Super Popular Matchings}

\author{Gergely Cs\'{a}ji}
\orcid{0000-0002-3811-4332}
\affiliation{%
\department{Institute of Economics}
\institution{ELTE KRTK}
\city{Budapest}
\country{Hungary}
}
%\email{}

\author{Frederik Glitzner}
\orcid{0009-0002-2815-6368}
\affiliation{%
\department{School of Computing Science}
\institution{University of Glasgow}
\city{Glasgow}
\country{United Kingdom}
}
%\email{f.glitzner.1@research.gla.ac.uk}

% Abstract. Note that this must come before \maketitle.
\begin{abstract}
\new{The efficient computation of large matchings with desirable fairness or stability guarantees is a crucial objective in centralised matching scheme design. However, even in simple two-sided matching markets with weak ordinal preferences, finding a maximum-size stable matching is NP-hard. Alternatively, popular matchings \nd{can be of larger size and} are weak Condorcet winners, but their existence is not guaranteed.} In this paper, we study a \new{novel} definition of popularity in bipartite graphs with two-sided preferences, \new{possibly with ties}, \nd{where agents are only indifferent between two matchings if they receive the same partner.} We show that this alternative definition of popularity, which we call weak popularity, \new{guarantees} the existence of such matchings, \nd{which can be significantly larger than stable matchings. Unfortunately, finding a maximum-size weakly popular matching turns out to be NP-hard even with one-sided ties. However, we provide} a polynomial-time algorithm to find a weakly popular matching that has at least $\frac{3}{4}$ times the size of a maximum-size weakly popular matching. We also show that this matching is at least $\frac{4}{5}$ times the size of a maximum-size weakly stable matching. \nd{Furthermore, we highlight that \new{our} $\frac{3}{4}$-approximation is best possible (under some complexity-theoretic assumptions). }

\new{We} complement our approximation results with an Integer Linear Programming formulation that solves the maximum-size weakly popular matching problem exactly. We extensively evaluate our algorithms on both randomly generated and real-world instances. Our experiments demonstrate that weakly popular matchings can be significantly larger than stable matchings, often covering all agents. Furthermore, we show that our approximation algorithm performs nearly optimally in practice. \new{Finally, we show that maximum-size weakly popular matchings can have very few blocking edges, suggesting that weak popularity offers a desirable trade-off between size and stability.}

We also study a model more general than weak popularity, where for each edge, we can specify for both agents the size of improvement the agent needs to vote in favor of a new matching. We show that even in this more general model, a so-called $\gamma$-popular matching always exists, and \new{our approximation algorithm applies.} Finally, we define a stronger variant of popularity, called super popularity, where even a weak improvement is enough to vote in favor of a new matching. We show that for this case, even the existence problem is NP-hard.
\end{abstract}

\begin{document}

% Title page for title and abstract only.
\begin{titlepage}

\maketitle
\makeatletter
\gdef\@ACM@checkaffil{}
\makeatother
% Optionally include a table of contents
\setcounter{tocdepth}{1} % adjust to 1 if desired
\tableofcontents

\end{titlepage}

% Paper 
\section{Introduction}

\subsection{Motivation}

In \nd{ordinal} preference-based \new{matching} markets, each market participant \nd{(agent)} expresses their preferences as a strict or weak ordering of some possible scenarios, and the aim is to find solutions that are fair or optimal according to some fairness or optimality notion. The area has been extensively studied in economics, mathematics, and computer science, originating from the seminal paper of \citet{gale1962college} about stable matchings. \nd{Here,} stability means that we aim for solutions \new{without} pairs of agents that could mutually improve by dropping their \new{current} partners and getting matched together \new{instead}. Their model for the stable matching problem has been applied to numerous real-world scenarios, including resident allocation, school choice, and even kidney exchange programs. For an extensive overview of the area of \new{matching under preferences, we refer to} \citet{manlove2013algorithmics}.

A relaxation of stability, called popularity, which gained significant theoretical interest in the literature recently, was introduced by \citet{gardenfors1975match}. Popular matchings translate the simple majority voting rule into the world of matchings under preferences, i.e., a matching is popular if it does not lose a head-to-head comparison to any other matching\new{, where an agent's vote is derived only from the partners they receive in the two matchings. This makes popular matchings equivalent to weak Condorcet winners \cite{CsehKavitha21}.}

Popular matchings have several advantages compared to stable matchings. First of all, a popular matching can be twice the size of a (maximum-size) stable matching \cite{huang2013popular}, so whenever matching many agents is an essential feature in an application, popularity can be a more favorable \new{objective} than stability. While \new{a popular matching} can admit blocking edges, it still satisfies a nice form of "global stability", meaning \nd{that} it cannot be blocked by any other matching. Another nice feature of popular matchings is that they can be defined for both one-sided and two-sided preferences, while stable matchings can only be defined for two-sided ones. One-sided markets are just as important from a practical point of view as two-sided ones, with applications ranging from fair allocation \cite{suksompong2021constraints} to refugee settlement \cite{refugeematching} and student allocation \cite{studentallocation}. Thirdly, popularity is stronger than Pareto-optimality, which is one of the most used optimality notions in practice \cite{cseh2017popular}. Finally, another nice feature of popular matchings is that their definition can be naturally made compatible with requiring maximum-size or maximum-weight solutions by restricting the possible candidate matchings. Hence, they are also useful in applications where maximizing some utility of the agents, or of a central authority, has a higher priority than stability. For an extensive overview of popular matchings, see \citet{cseh2017popular}.

However, a large drawback of popular matchings is that \new{the existence of popular matchings is not guaranteed when preference lists are not strictly ordered. Even} worse, deciding if an instance with ties admits a popular matching is NP-complete, even if ties occur on one side of the market only, as \citet{cseh2017onesidedties} showed. In contrast, \new{(weakly)} stable matchings always exist in this model; however, a maximum-size stable matching \nd{is} NP-hard to find \cite{IMMMmaxsmti}, and the current best approximation ratio is $\frac{2}{3}$ \cite{Mcdermid09,kiraly2012linear}, which holds even in significantly more general settings \cite{csaji2023simple}.

\subsection{Our contributions}
\label{sec:contribution}

\nd{Due to the absence of existence guarantees of popular matchings in the presence of ties, we propose and study a new notion of popularity called \emph{weak popularity}, which differs from classical popularity as follows: any agent who compares a current matching $M$ to an alternative matching $N$ prefers to stick to $M$ (rather than being indifferent between them) when they receive an equally good (but not the same) partner in $N$ compared to $M$. A matching is then weakly popular if it does not lose in a head-to-head election against any other matching given this voting rule. This notion is inspired by the classical notion of weak stability, in which a pair of agents blocks only if they are both \emph{strictly} better off being assigned to each other. Weak popularity is practically motivated: suppose that a matching procedure has already taken place, e.g., students are already assigned to courses or resident doctors are already assigned to hospitals. Then, unless agents get better off, they would likely prefer to stick to the current matching as they may have already started the course or made arrangements. We also generalise this notion of weak popularity to \emph{$\gamma$-popularity} (inspired by $\gamma$-min stability), in which agents provide thresholds for the increase in utility they require from an alternative matching $N$ to vote for $N$ rather than the current $M$.}

\nd{We prove that, unlike classical popular matchings, weakly popular and $\gamma$-popular matchings always exist and can be found in polynomial time. However, we show that they can have different sizes and that finding maximum-size weakly popular (and by generalisation also $\gamma$-popular) matchings is NP-hard, even if ties occur on one side of the market only, ruling out an exact polynomial time algorithm for these problems (unless P=NP). However, on the positive side, we show that weak and $\gamma$-popularity can be verified efficiently and provide an algorithm to find approximately maximum-size such matchings in linear time. In particular, we show that in both the weak popularity model and the more general $\gamma$-popularity model, there exists an efficient $\frac{3}{4}$-approximation algorithm. We also show that the size of the output of this algorithm is at least $\frac{4}{5}$ times the size of a maximum-size stable matching (in the case of weak popularity) or of a maximum-size $\gamma$-min stable matching (in the case of $\gamma$-popularity), both of which are NP-hard to find (and $\frac{2}{3}+\varepsilon$ inapproximability results exist~\cite{smti1.5inapprox}). This brings a significant advantage in practice, where existence and size guarantees are crucial. Finally, we establish that our $\frac{3}{4}$ approximation ratio is tight under the strong-Unique Games Conjecture (or the Small Set Expansion Hypothesis). On the technical side, we use the edge duplication technique. While this technique is similar to the idea of promoting agents \cite{kiraly2012linear}, it allows us to handle, analyze, and solve problems much more easily. Our algorithm combines and extends the one of \citet{kiraly2012linear} for \smti\ (generalised in \cite{csaji2025simple} for the maximum-size \cgstab\ matching problem), and the one by \citet{kavitha2014size} for the maximum-size popular matching with strict preferences.}

\nd{To further bridge the gap between theory and practice, we also address the computation of exact solutions and provide empirical evidence for the advantageous performance of our algorithms. We provide an exact Integer Linear Programming (ILP) algorithm for finding maximum-size weakly popular matchings and substantiate our theoretical contributions with extensive experiments on both randomly generated instances and real-world data from the Scottish Foundation Allocation Scheme (SFAS).\footnote{See \url{https://matching-in-practice.com/the-scottish-foundation-allocation-scheme-sfas/}.} Our empirical results highlight three key findings: first, weakly popular matchings can be significantly larger than weakly stable matchings -- matching all or nearly all agents and, importantly, up to $10\%$ more agents than maximum-size weakly stable matchings. Second, we observe that our approximation algorithm performs near optimally in practice, often finding a maximum-size weakly popular matching. Third, by further minimising the number of blocking edges, we observe that the outcomes are remarkably close to being stable, suggesting that weak popularity can offer a notable size increase without introducing much instability.}

\nd{Finally, we introduce another alternative notion of popularity in which agents who vote between a current matching $M$ and an alternative matching $N$ would strictly prefer to switch to $N$ even when they receive an equally good (but not the same) partner in $N$ compared to $M$. We motivate this through altruistic agents who have nothing to lose, but could help other agents be strictly better off by voting for the alternative. We call any matching that does not lose in a head-to-head election against any other matching with this modified voting rule \emph{super popular}, inspired by the classical notion of super stability. Unfortunately, super-popular matchings might not exist and we show that deciding the existence is NP-hard, even if there are only two ties in the instance.}

\subsection{Structure of the paper}

\nd{In Section \ref{sec:prelim}, we discuss related work and provide formal definitions and preliminary observations. We then prove that finding maximum-sized weakly and $\gamma$-popular matchings is NP-hard and provide our central approximation algorithm in Section \ref{sec:approxalgo}. Section \ref{sec:verify} focuses on the verification and exact computation of (large) weakly and $\gamma$-popular matchings and the intractability of finding (any) super popular matchings. Finally, in Section \ref{sec:exp}, we provide and discuss our experimental results, before concluding in Section \ref{sec:conclusion}.
}

\section{Preliminaries}
\label{sec:prelim}

\subsection{Related work}
\label{sec:related} 

The Stable Marriage problem was introduced by \citet{gale1962college}, \new{who showed that a stable matching always exists and how one can be computed in linear time. Under strict preferences, all stable matchings have the same size \cite{galesoto85}, but this is not true for the setting with ties and incomplete lists. Specifically,} the \emph{stable marriage problem with ties and incomplete lists} (\textsc{smti}) was first studied by \citet{IMMMmaxsmti}, who showed the NP-hardness of \smti, which is the problem of finding a maximum-size stable matching in an \textsc{smti} instance. Super stability, which is a stronger version of stability, was introduced by \citet{irving1994stable}, who proved that a super stable matching can be found in polynomial time, if one exists. Also, interestingly, super stable matchings have the same size, even with weak preferences. 

Later, it was shown by \citet{Mcdermid09} that \smti\ can be approximated in polynomial time within a factor of $\frac{2}{3}$. \citet{kiraly2012linear} gave a much simpler linear-time algorithm with the same approximation ratio, using a multilevel Gale-Shapley algorithm. On the negative side, \citet{halldorsson2002inapproximability} showed that it is NP-hard to approximate \smti\ within some constant factor. \citet{yanagisawa2007approximation} and \citet{halldorsson2003improved} showed that, assuming the Unique Games Conjecture (UGC), there is no $\frac{3}{4}+\varepsilon$-approximation for any $\varepsilon >0$. Recently, \citet{smti1.5inapprox} proved that assuming the Small Set Expansion Hypothesis or the strong-UGC, there cannot even be a $\frac{2}{3}+\varepsilon$-approximation algorithm for \smti.

Popularity as a solution concept was first introduced for the two-sided preference model by \citet{gardenfors1975match}, who showed that every stable matching is popular. \citet{abraham2007popular} introduced popularity for the one-sided House Allocation model. Both of these papers provided polynomial-time algorithms for finding a popular matching in the given model. Later, the focus of research shifted to finding maximum-size popular matchings; \citet{huang2013popular} showed that maximum-size popular matchings can also be found in polynomial time. \citet{kavitha2014size} gave an algorithm for this problem that used a multilevel Gale-Shapley approach, which is similar to our edge-duplicating method, but the latter allows much easier analysis and is easier to generalise. 

We note that for the non-bipartite Stable Roommates model, \new{deciding the existence of} a popular matching is NP-hard \cite{faenza2019popular,gupta2021popular}. Similarly, if ties are allowed in the preferences of the agents, then even in the two-sided model, a popular matching may fail to exist, and finding one becomes NP-hard, as \citet{cseh2017onesidedties} showed. To overcome this obstacle, we introduce an alternative definition of popularity in this case.

Notions related to our $\gamma$-popularity have been studied in \cite{chen2021matchings,csaji2023approximation,csaji2023simple}. \citet{chen2021matchings} defined \emph{local $d$-near-stability} based on swaps in the preference orders, such that a matching is locally $d$-near stable if there is no blocking edge $(u,w)$ that still remains blocking even after moving $w$ and $u$ $d$ places down their original preference lists. In \citet{csaji2023approximation}, a similar notion for cardinal preferences called $\Delta$-min stability was introduced, which trquires that for any edge to block, both endpoints should improve by at least $\Delta$. \citet{csaji2023simple} generalised this to a concept called $\gamma$-min stability, where for each edge $e=(u,w)$, distinct $\gamma_e^u>0$ and $\gamma_e^w>0$ values are given, and for an edge $(u,w)$ to block, $u$ should improve by at least $\gamma_e^u$ and $w$ by $\gamma_e^w$. He extended the $\frac{2}{3}$-approximation algorithm for \smti\  to this more general setting. 

\new{Another notable alternative to stability and popularity in the absence of stable or popular maximum-cardinality matchings is the computation of ``almost-stable'' maximum-cardinality matchings, that is, maximum-cardinality matchings with a minimum number of blocking pairs (or, similarly, blocking agents). \citet{biro_sm_10} proved that such a matching is NP-hard to approximate within $n^{1-\varepsilon}$ (for $n$ agents and for any $\varepsilon>0$). \citet{hamada09} strengthened this result to the case where all preference lists are of length at most 3, \citet{gupta2020parameterized} first considered this problem from a parametrised complexity perspective, and \citet{chen2025fptapproximabilitystablematchingproblems} proved that it is {\sf W[1]-hard} with respect to the optimal value to even approximate the problem within any computable function depending only on this optimal value. Recently, \citet{minimax,glitznermanloveasaamas26} studied the problem from a minimax fairness perspective, proving that even deciding whether there exists a matching with no more than one blocking pair per agent is NP-complete, thus deeming almost-stable matchings an impractical alternative for many large real-world matching markets.}

\new{The Hospitals/Residents problem generalises the Stable Marriage problem by assuming that one side of the market (the hospitals) can be matched to more than one agent. Stability remains defined in a similar way, and we refer to \citet{manlove2013algorithmics} for an extensive overview. Popularity in this} setting has been studied, for example, by \citet{nasre2017popular,nasre2017popularity} and \citet{krishnapriya2018good}. \citet{nasre2017popular} showed that, even in the presence of lower quotas, whenever a feasible matching exists, there also exists a popular matching among the feasible ones, and they provided an efficient algorithm to find such a matching. \citet{nasre2017popularity} presented an algorithm for computing a maximum-size popular matching in a setting where residents are partitioned into classes, and hospitals have laminar upper quotas that bound the number of residents from each class.

The only work that investigates the relationship between stability and popularity from an experimental perspective appears to be by \citet{krishnapriya2018good}. Their study, however, is restricted to strict preferences on both sides. They observed that in this setting, popular matchings can be up to 10$\%$ larger than stable matchings, are often close in size to maximum matchings, and exhibit relatively few blocking edges. In contrast, we consider instances with ties, where finding a maximum-size stable matching is NP-hard and a popular matching may not exist (although weakly popular matchings do). Despite the more general setting, we observe similar behavior. We also consider minimizing blocking edges with respect to maximum-size popularity constraints to show that surprisingly few blocking edges are needed in that case, less than 1 on average.

\subsection{Formal definitions}

We investigate matching markets, where the set of agents with the possible set of contracts is given by a (not necessarily simple) bipartite graph $G=(U,W;E)$. For each agent $v\in U\cup W$, we denote by $E(v)$ the edges that are incident to $v$. We assume that for each agent $v$, there is a \textit{preference valuation} $p_v:E(v) \to \mathbb{R}_{\ge 0}$, which defines a weak ranking over the adjacent edges of $v$. This is a strictly more general setting than ordinary weak preference lists, where only a weak order $\succeq_v$ is given for each agent $v\in U\cup W$ over $E(v)$. We note that we assume that the vertices (i.e., the agents) rank their incident edges instead of the adjacent vertices on the other side, because we allow parallel edges in our model. This corresponds to allowing multiple types of contracts between two given agents. For example, in resident allocation, there may be multiple edges between a hospital and a resident corresponding to different salaries for the same position, and it may happen that even though the resident $r$ is worse than some other resident $r'$, a contract with a lower salary for $r$ is preferred by the hospital to the better resident $r'$ with a higher salary. We also assume that $p_v(\emptyset )=-\infty$, which means that an agent always strictly prefers to be matched to any acceptable partners rather than being unmatched.

We say that an edge set $M\subseteq E$ is a \textit{matching}, if $|M\cap E(v)|\le 1$ for each $v\in U\cup W$. For a vertex $v$, let $M(v)$ denote the edge incident to $v$ in $M$, if there is any, otherwise $M(v)=\emptyset$. A matching $M$ is \textit{maximal} if no edge can be added to it. A matching is \textit{maximum} if there exists no larger one.

\paragraph{\textbf{Stability.}}

We start by defining weakly stable matchings. \new{Given a matching $M$, we say that an edge $e=(u,w)\notin M$ is \emph{dominated} at $u$ (resp. $w$), if $p_u(M(u))\ge p_u(e)$ (resp. $p_w(M(w))\ge p_w(e)$). }
\new{We say that a matching $M$ is \textit{weakly stable}, or simply \textit{stable}, if every edge $e=(u,w)\notin M$ is dominated.} Otherwise, such an edge is called a \textit{blocking edge}. In the context of ordinary weak preference orders $\succeq_v$ for $v\in U\cup W$ this is just equivalent to saying that there should be no edge $(u,w)$ where $u$ and $w$ strictly prefer each other to their partners in $M$.  

\paragraph{\textbf{Super stability.}}

There is a stronger version of stability, called super stability, which is defined as follows. Let $M$ be a matching. We say that $M$ is \textit{super stable} if there is no edge $e=(u,w) \notin M$ such that $p_u(e)\ge p_u(M(u))$ and $p_w(e)\ge p_w(M(w))$, so here even a weak improvement is enough to block $M$. \nd{Clearly, a super stable matching need not exist.}

\paragraph{\textbf{$\gamma$-min stability.}}

Now we define $\gamma$-min stability.
Suppose we are given $\gamma_e^u>0,\gamma_e^w>0$ values for each edge $e=(u,w)\in E$. They specify the size of improvement an agent $u$ needs in order to prefer the edge $e$ to their current assignment. Let $M$ be a matching. We say that an edge $e=(u,w)\notin M$ \textit{$\gamma$-min blocks $M$}, if $p_u(e)\ge p_u(M(u))+ \gamma_e^u$ and $p_w(e)\ge p_w(M(w)) +\gamma_e^w$. We say that $M$ is \textit{$\gamma$-min stable}, if there is no $\gamma$-min blocking edge to $M$. The case of weak stability corresponds to the special case where every $\gamma_e^v$ value is sufficiently small, and super stability would correspond to the case where every $\gamma_e^v$ is $0$ \nd{(but it is necessary for our subsequent results to assume strictly positive values for $\gamma_e^v$ instead).} 

\paragraph{\textbf{Popularity.}} 

Now, we define popular matchings with their original definition. Let $M$ be a matching, and let $N$ be another matching. Each agent $v\in U\cup W$ compares these two matchings and casts a vote as follows. If $p_v(M(v))=p_v(N(v))$, then we define $\vote_v(M,N)=0$. If $p_v(M(v))>p_v(N(v))$, then $\vote_v(M,N)=+1$, and finally if $p_v(M(v))<p_v(N(v))$, then $\vote_v(M,N)=-1$. Then, we aggregate the votes of each agent and obtain $\Dvote (M,N)=\sum_{v\in U\cup W}\vote_v(M,N)$. Clearly, with this definition, we have $\vote_v(M,N)+\vote_v(N,M)=0$. 
We say that a matching $M$ is \emph{popular}, if for any matching $N$, it holds that $\Dvote (M,N)\ge 0$. \new{Else, if $\Dvote (M,N)<0$, we say that $N$ \emph{dominates $M$}.}
Furthermore, we say that $M$ is \emph{dominant}, if (i) $M$ is a maximum size popular matching and (ii) $\vote (M,N)>0$ for any matching $N$ with $|N|>|M|$. 

\paragraph{\textbf{Weak popularity.}}

We provide an alternative definition of popularity as follows. Let $M,N$ be two matchings. We define $\votew_v(M,N)$ in a slightly different manner. If $p_v(M(v))>p_v(N(v))$ or $p_v(M(v))<p_v(N(v))$, then $\votew_v(M,N)$ is $+1$ and $-1$ respectively, same as before. Also, if $M(v)=N(v)$, then again $\votew_v(M,N)=0$. However, when $M(v)\ne N(v)$, but $p_v(M(v))=p_v(N(v))$, we define $\votew_v(M,N)=+1$. \nd{Recall our motivation for this notion from Section \ref{sec:contribution}:} if an agent $v$ does not improve in the matching $N$ compared to $M$, but has to change partners, then they would probably prefer to keep their original partner and not go through the additional effort to obtain a new partner that they value equally. Then, we define a matching $M$ to be \textit{\wpop}, if for any matching $N$, it holds that $\Dvotew (M,N)=\sum_{v\in U\cup W}\votew_v(M,N)\ge 0$. 
%We define $M$ to be \textit{\wdom} if (i) $M$ is a maximum size \wpop\ matching and (ii) for any matching $N$ with $|N|>|M|$, it holds that $\votew(M,N)>0$. 

\paragraph{\textbf{$\gamma$-popularity.}}

We also generalise weak popularity, inspired by $\gamma$-min stability. Suppose again that we are given $\gamma_e^u>0,\gamma_e^w>0$ values for each edge $e=(u,w)\in E$. \nd{For any $v\in U\cup W$, we can assume that $\gamma_{\{v,v\}}^v=0$.} Now, let $M$ and $N$ be two matchings. We define $\votec_v(M,N)$ as follows. First, if $M(v)=N(v)$, then $\votec_v(M,N)=0$. \nd{Now, for $M(v)\ne N(v)$,} if $p_v(N(v))\ge p_v(M(v))+\gamma_{\nd{e}}^v$, \nd{where $e=N(v)$}, then $\votec_v(M,N)=-1$\nd{, and if} $p_v(N(v))<p_v(M(v))+\gamma_e^v$, \nd{where $e=N(v)$}, then $\votec_v(M,N)=+1$. Then, we define a matching $M$ to be \textit{\cpop}, if for any matching $N$, it holds that
$\Dvotec (M,N)=\sum_{v\in U\cup W}\votec (M,N)\ge 0.$
\nd{Again, recall our motivation for this notion from Section \ref{sec:contribution}: for any agent to deviate to a different partner,} the improvement they receive in the new matching must be large enough to justify the effort of deviation, and these efforts can be different for each agent and edge. In a course allocation setting, for example, if the students already started some courses according to a matching $M$, then even if a student would receive a better course in a different matching $N$, it may not be worth it to switch anymore, since they have to restart from the beginning. Hence, it is natural to assume that \new{there are certain valuation thresholds for the agents that need to be met} to consider a new matching worthy of changing to it. 

\paragraph{\textbf{Super popularity.}}

Motivated by super stable matchings, we also introduce a stronger variant of popularity, called super popularity. In the super stable matching model, even a partner that is equally good is considered good enough to block. Hence, now we define the votes as follows. 
Let $M,N$ be two matchings. If $p_v(M(v))> p_v(N(v))$, then $\votes_v(M,N)=+1$. If $N(v)=M(v)$, then $\votes_v(M,N)=0$. Finally, if $p_v(N(v))\ge p_v(M(v))$ and $M(v)\ne N(v)$, then $\votes_v(M,N)=-1$. We say that $M$ is \textit{\spop} if for any other matching $N$, it holds that $\Dvotes (M,N)=\sum_{v\in U\cup W}\votes_v(M,N)\ge 0.$ 
\nd{Recall from Section \ref{sec:contribution} that one} can think of this as having cooperative agents who are willing to help others deviate and receive better contracts by falsely reporting that they prefer the new matching, even if it is just as good for them. We could consider a more general model, where agents would help \new{others even if they were to receive a slightly worse partner in the new matching; however, we will see that even the super popularity objective above leads to strong computational intractability results.}

\subsection{Problem statements and observations}

Now we define the central optimisation problem that we investigate, called \textsc{Maximum \cpop\ matching with ties and incomplete preferences}, abbreviated as \cpopOPT. 
\pbDef{\cpopOPT}{
A bipartite graph $G=(U,W;E)$, $p_v()$ preference valuations for each $v\in U\cup W$, numbers $0<\gamma_e^v$ for each pair $(e,v)\in E\times (U\cup W)$ such that $v\in e$.
}{
A maximum-size \cpop\ matching $M$.
}

\nd{For the restricted case of weak popularity, which corresponds to the case when each $\gamma_e^v$ is sufficiently small such that any strict improvement is enough to change the vote}, we call the optimisation problem \wpopOpt.

For super popularity, we define the following existence problem. 
\pbDef{\spopex}{
A bipartite graph $G=(U,W;E)$ and $p_v()$ preference valuations for each $v\in U\cup W$.
}{
Is there a \spop\ matching $M$?
}
%The restriction of the problem to ordinary weak preferences is denoted by \spopex.

Before we move on to our main results, let us show a simple theorem that connects \cpop\ and \cgstab\ matchings.
\begin{theorem}
\label{thm:stableispop}
    Any \cstab\ matching $M$ is also \cpop\, and any super stable matching is also \spop.
\end{theorem}
\begin{proof}
   Let $M$ be a \cstab\ matching. Then, for any matching $N$, it holds that the sum of votes on each edge $e=(u,w)$ of $N$ is at least 0. Indeed, by the definition of the votes, either both are 0, or neither is. Hence, if the two votes of the endpoints are negative, then both votes are $-1$, but by the definition of the votes, such an edge is a blocking edge to $M$, a contradiction. 
   % Let $M$ be a \cstab\ matching of $G=(U,W;E)$ and consider any alternative matching $N$. Recall from the definition that, for any agent $u\in U$, if $\votec_u (M,N)=-1$, then $M(u)\ne N(u)$ and $p_u(N(u))\ge p_u(M(u))+\gamma_{N(u)}^u$. Now, notice that clearly $M(N(u))\ne N(N(u))$, so $\votec_{N(u)} (M,N)\in\{-1,+1\}$. However, if $\votec_{N(u)} (M,N)=-1$, then $p_{N(u)}(N(N(u)))\ge p_{N(u)}(M(N(u)))+\gamma_{N(N(u))}^{N(u)}$, in which case $(u,N(u))$ satisfies the definition of a $\gamma$-blocking edge, so $M$ is not \cstab\ -- a contradiction. Thus, it must be the case that $\votec_{N(u)} (M,N)=+1$. By a symmetrical argument, it follows that, for any agent $w\in W$, if $\votec_w (M,N)=-1$, then $\votec_{N(w)} (M,N)=+1$. Thus, 
  %  \begin{align*}
   %     &\{u\in U\;\vert\; \votec_u (M,N)=-1\}+\{w\in W\;\vert\; \votec_w (M,N)=-1\}\\
    %    & \leq \{u\in U\;\vert\; \votec_u (M,N)=+1\}+\{w\in W\;\vert\; \votec_w (M,N)=+1\},
    %\end{align*}
    %so $\sum_{v\in U\cup W}\votec (M,N)\geq 0$ and therefore $\Dvotec (M,N)\geq 0$ as required.
    %
  %  By a symmetrical argument, substituting $\votes$ for $\votec$ and super stable-blocking for $\gamma$-blocking, the same result follows for super stability and super popularity.
\end{proof}

Furthermore, we highlight that the following observation by \citet{irving02srt} (for the more general Stable Roommates with Ties problem) extends naturally to our problems.

\begin{prop}[\cite{irving02srt}]
\label{prop:tiebreakstable}
    Any matching $M$ is weakly stable if and only if it is stable in some instance obtained by breaking the ties. Furthermore, $M$ is super stable if and only if $M$ is stable in every instance obtained by breaking the ties.
\end{prop}

\begin{prop}
\label{prop:tiebreakpopular}
    Any matching $M$ is weakly popular if and only if it is popular in some instance obtained by breaking the ties. Furthermore, $M$ is super popular if and only if $M$ is popular in every instance obtained by breaking the ties.
\end{prop}
\begin{proof}
    It is easy to see that given a matching $M$, if we break ties in a way such that for any $e\sim_v f$ with $e\in M(v)$, we have that $e\succ'_v f$, then the vote $\votew_v(M,N)$ between $M$ and any other matching $N$ in the weakly popular setting before the tie-breaking is the same as the vote of $v$ in the strict instance obtained after tie-breaking. If there is a tie-breaking where $M$ is popular, then as $\vote_v(M,N)=+1$ implies $\votew_v(M,N)=+1$, we have that for any $N$, $0\le \Dvote (M,N)\le \Dvotew (M,N)$. Similarly, after any tie-breaking, $\vote_v(M,N)=-1$ implies $\votes_v(M,N)=-1$ in the original instance, so if there is a tie-breaking where a matching $N$ satisfies $\Dvote (M,N)<0$, then $0>\Dvote (M,N) \ge \Dvotes (M,N)$, so $M$ is not super popular. If $M$ is popular even for the tie-breaking, where for any $e\sim_v f$ with $e\in M(v)$, we have that $f\succ'_v e$ (so the vote $\votes_v(M,N)$ between $M$ and any other matching $N$ in the super popular setting before the tie-breaking is the same as the vote of $v$ in the strict instance obtained after tie-breaking), then it must be super popular.
\end{proof}

\new{It is easy to see that weakly popular matchings can be up to two times larger than weakly stable matchings, this holds even for strict preferences: just take a path with 3 edges, where the middle edge is strictly preferred by both of its vertices. Also, a super popular matching may exist, even if a super stable does not: take again a path with 3 edges, but now all edges are tied. Then, the perfect matching in the instance is super popular, but no matching is super stable.}

\section{Computing large $\gamma$-popular matchings}
\label{sec:approxalgo}

\subsection{Intractability of finding optimal solutions}

We start by highlighting the following intractability result that holds even for \wpopOpt\ and in very restricted settings, and thus also applies to \cpopOPT\ under the same restrictions.

\begin{restatable}{theorem}{NPc}
\label{thm:NPc}
    \wpopOpt\ is NP-hard, even with one-sided ties. 
\end{restatable}
\begin{proof}
    We reduce from \maxsmti, where we are given an instance $I$ of the stable matching problem with ties and a number $k$, and the question is whether there exists a stable matching with size at least $k$. In particular, we reduce from an NP-hard restricted variant, where ties occur on one side only, say on side $W$ \cite{manlove2002hard}. 
    
    Let $I=(G,\succ)$ be such an instance of \maxsmti, with $U=\{ u_1,\dots,u_n\}$ and $W=\{ w_1,\dots,w_n\}$. We create an instance $I'$ from $I$ as follows. We add an agent $z_i$ to $W$ for each $i\in [n]$ to get $W'$ and an agent $z_i'$ to $U$ for each $i\in [n]$ to get $U'$. The preference list of $z_i$ is $u_i\succ z_i'$. Agent $z_i'$ only considers $z_i$ acceptable. Finally, for each agent $u_i\in U$, we add $z_i$ to the top of their preference list. So, if $u_i$ had a preference list $w_1\succ w_2$ in $I$, now they have $z_i\succ w_1\succ w_2$. Note that in our construction, ties still occur on side $W'$ only. 

    We show that there is a stable matching of size $n-l$ in $I$ if and only if there is a \wpop\ matching of size $2n-l$ in $I'$, for any $0\le l\le n$.

\begin{claim}
    Let $M'$ be a \wpop\ matching in $I'$ of size $2n-l$. Then, there is a stable matching $M$ in $I$ of size $n-l$.
\end{claim}
\begin{claimproof}
    Clearly, $M'$ can contain at most $n$ edges from $\{ (z_i,z_i'),(u_i,z_i)\mid i\in [n]\}$, therefore $M'$ induces a matching $M$ of size at least $n-l$ in $G$. If this matching is not maximal in $G$, then let $H$ be the subgraph of the unmatched vertices of $M$ in $G$, and let $N$ be a stable matching in $H$. Then, we extend $M$ by the edges of $N$. Clearly, $M$ still has at least $n-l$ edges, and now it is maximal in $G$. 
    
    We claim that $M$ is stable. Suppose that $e=(u_i,w_j)$ blocks $M$, with $u_i\in U,w_j\in W$.

    Assume that $w_j$ is unmatched in $M'$. Then $u_i$ is matched in $M'$, which follows from $M'$ being \wpop. If $M'(u_i)\in W$, then $M'(u_i)=M(u_i)$ and $M'\setminus \{ M'(u_i)\} \cup \{ (u_i,w_j)\} $ dominates $M'$, as there is two $-1$ votes and one $+1$ vote, contradiction. If $M'(u_i)=z_i$, then in the restriction of $M'$ to $G$ both $u_i$ and $w_j$ were unmatched, so they were both in $H$. As we extended $M$ with a stable matching in $H$, we get that the edge $(u_i,w_j)$ cannot block, which is a contradiction. Hence, from now on, we may assume that $w_j$ is matched in $M'$, hence has the same partner as in $M$. Let $u_{i'}$ be the partner of $w_j$ in $M$. 
    
    If $u_i$ in unmatched in $M'$, then $M'\setminus \{ M'(w_j)\} \cup \{ (u_i,w_j)\} $ dominates $M'$, which is a contradiction. Suppose that $M'(u_i)\in W$, so $M'(u_i)=M(u_i)$. Then, we create a matching $N'$ as follows. Delete the edge $M'(u_i),M'(w_j)$ and $(z_{i'},z_{i'}')$ from $M'$, and add $(u_i,w_j)$ and $(u_{i'},z_{i'})$. Now, if we compare $N'$ to $M'$, $z_{i'},u_{i'}$ vote with $-1$, $u_i,w_j$ also vote with $-1$, as $e=(u_i,w_j)$ is a blocking edge to $M$ and hence $M'$ and only $z_{i'}'$ and $u_i$'s original partner vote with $+1$, everyone else votes with $0$. Therefore, $\Dvotew (M',N')<0$, which is a contradiction.  

    Suppose, finally, that $u_i$ is matched to $z_i$ in $M'$. Create $N'$ from $M'$ by deleting $(u_i,z_i), M'(w_j)$ and $(z_{i'},z_{i'}')$ and adding $(z_i,z_i'),(u_i,w_j)$ and $(u_{i'},z_{i'})$. Then, $z_i',w_j,u_{i'},z_{i'}$ all vote with $-1$ against $N'$ and only $z_i,u_i,z_{i'}'$ vote with $+1$, so $\Dvotew(M',N')<0$, which is a contradiction.
\end{claimproof}

\begin{claim}
    Let $M$ be a stable matching of size at least $n-l$ in $I$. Then there is a \wpop\ matching of size $2n-l$ in $I'$.
\end{claim}
\begin{claimproof}
    Let $M'$ be the matching we get from $M$ as follows. For each unmatched $u_i$ in $M$, we add the edge $(u_i,z_i)$, and for each matched $u_j$ in $M$, we add the edge $(z_j,z_j')$.

    Suppose that there is a matching $N'$ such that $\Dvotew (M',N')<0$. We may suppose that $N'$ is (inclusion-wise) maximal. 

    Observe that, since $M$ is stable, if we add the two votes on an edge $e=(u,w)\in N'$, with $u,w \in U\cup W$, then we get at least $0$. Indeed, by the definition of the votes, if one of them votes with $0$, then so does the other one, and therefore if $\votew_u (M',N') +\votew_w(M',N')<0$, then both votes are $-1$, but this means that $(u,w)$ blocks $M$, which is a contradiction. 

    Let $x$ be the number of agents in $W$ who are matched in $M'$ but not matched in $N'$, and denote them by $X$. Let $y$ be the number of agents in $U$ that are matched to a $z_i$ agent in $M'$, but are not matched to a $z_i$ agent in $N'$, and denote them by $Y$. Then, there are $x$ agents who vote with $+1$, but we do not count them on the edges of $N$. Also, there are $y$ agents from $U$, who vote with $+1$. Also, if such an $u_i$ agent has a partner $w_j\in W$ in $N'$, then $w_j$ also votes with $+1$, as otherwise the edge $(u_i,w_j)$ would have blocked $M$, as $u_i$ was unmatched in $M$, which is a contradiction. Hence, these $+1$ votes are not canceled by the other endpoints of the given edges in $N$.

    Observe that any edge of $N'$, where both endpoints vote with $-1$ must be of the form $(u_i,z_i)$. Let the number of these be $z$ and denote the union of agents in them by $Z$. Then, these edges together give us $-2z$ aggregate votes, but the original partners $z_i'$ of the $z_i$ agents vote with $+1$ in this case. Our goal is to show that $z\le x+y$. In this case, the $x$ votes of $+1$ of the agents in $X$ and the $y$ votes of $+1$ of the ones in $Y$ cancel the $z$ votes of $-1$ of the agents in $Z$, and therefore we get that $\Dvotew (M',N')\ge 0$, which is a contradiction.

    Each $u_i$ agent from the set $Z$ originally had a partner in $W$ that they are not matched to in $N'$. As the agents in $W$ can only be matched to agents from $U$, we get that because of each $u\in Z$, either there is one less agent in $W$ matched in $N'$ or one less agent in $U$ that is matched to some $z_i$. Therefore, we do indeed have $z\le x+y$. 
\end{claimproof}

Hence, it is NP-hard to decide if there is a \wpop\ matching of size at least $n+k$. 
\end{proof}

\subsection{Approximation algorithm}

In light of this strong intractability result, we will now describe how to approximate \cpopOPT\ in \nd{linear time} within a factor of $\frac{3}{4}$. Let us first provide a high-level idea of the algorithm, which is similar to recent approaches (see \cite{yokoi2021approximation,kamiyama2020popular,csaji2023approximation,csaji2023simple}).

Let $I$ be an instance of \cpopOPT.
\begin{enumerate}
    \item Create an instance $I'$ of the Stable Marriage problem with strict preferences by making parallel copies of each edge and creating strict preferences over the created edges.
    \item Run the Gale-Shapley algorithm \cite{gale1962college} to obtain a stable matching $M'$ in the new instance $I'$.
    \item Take the projection $M$ of $M'$ in $I$ by adding any edge $e\in E$ to $M$ whenever one of the parallel copies of $e$ in $I'$ is contained in $M'$.
\end{enumerate}

\medskip

We now describe the full details of the algorithm. Let $I$ be an instance of \cpopOPT\ \new{ and let $I'$ be the new instance we construct}. For each edge $e$, we create parallel copies $a(e),b(e),c(e),x(e),y(e),z(e)$ in $I'$. Then, we create strict agent preferences over the edges in $I'$ as follows. For each vertex (agent) $u\in U$, we rank the copies according to the rule 
$$ a\succeq^{\gamma} b \succ c \succ x \succeq^{\gamma} y\succ z.$$
For each vertex $w\in W$, we rank the copies according to the rule
$$ z\succeq^{\gamma} y \succ x \succ c \succeq^{\gamma} b\succ a.$$

Here, $\alpha \succ \beta$ denotes that, for any two edges $e$ and $f$, the copy $\alpha(e)$ is always ranked higher than the copy $\beta(f)$. The notion $\alpha \succeq^{\gamma} \beta $ denotes that $\alpha (e)\succ_v \beta (e)$ for any edge $e$, and $\beta (f)\succ_v  \alpha (e)$ if and only if $p_v(f)\ge p_v(e)+\gamma_f^v$. For any copy that is not $b$ or $y$, we rank the edges of the same copy according to the preference functions $p_v()$ by breaking the ties arbitrarily.

This ranking can be obtained in the following way. For each vertex $u\in U$, rank the $a,c,x,z$ copies strictly after each other according to $p_v()$ by breaking the ties arbitrarily. Then we can insert the $b$ and $y$ copies of each edge $f$ in a way such that $b(f)\succ_ua(e)$ if and only if $p_u(f)\ge p_u(e)+\gamma_f^u$, and $b(f)\succ_u c(g)$ for any edge $g$. Similarly, we can insert the $y$ copies such that $y(f)\succ_ux(e)$ if and only if $p_u(f)\ge p_u(e)+\gamma_f^u$, and $c(g)\succ_u y(f) \succ_u z(g)$ for any edge $g$. We remark that this construction allows $b$ and $y$ copies to be ranked differently than in $p_v()$, similarly to \citet{csaji2023simple}. 

\new{To illustrate this construction in a simple case, suppose that an agent $u\in U$ has the preference list (over edges) 
$$u\;:\; e\sim f\succ g$$ 
in $I$, induced by preference valuation $p_u(e)=p_u(f)=2$, $p_u(g)=1$ and let $\gamma_e^u=2$, $\gamma_f^u=\gamma_g^u=1$. Then the new preference list over the edge copies of $e,f$ and $g$ of this agent is 
\begin{align*}
  u:\;a(e)&\sim a(f)\succ b(f)\succ a(g) \succ b(e) \succ b(g)\succ c(e)\sim c(f)\succ c(g)\\
  &\succ  x(e)\sim x(f)\succ y(f)\succ x(g)\succ y(e)\succ y(g)\succ z(e)\sim z(f)\succ z(g).
\end{align*}
Breaking the ties arbitrarily, we get, for example, the following strict preference list of $u$ in $I'$:
\begin{align*}
  u:a(e)&\succ a(f)\succ b(f)\succ a(g)\succ b(e)\succ b(g)\succ c(e)\succ c(f)\succ c(g)\\
  &\succ  x(e)\succ x(f)\succ y(f)\succ x(g)\succ  y(e)\succ y(g)\succ z(e)\succ z(f)\succ z(g).
\end{align*}
}

\vspace{-1.2em}

For agents in the set $W$, we construct the preferences similarly, but using the other ranking rule we specified above. Any remaining ties in the preference system can be broken arbitrarily. 

\new{Clearly, the algorithm can be implemented in linear time, as the instance size increases by a constant factor in the original number of edges, and the Gale-Shapley algorithm runs in linear time \cite{gale1962college}.}

\begin{theorem}
\label{thm:approxguarantees}
Let $M$ be the output of this \new{algorithm. It holds that} 
\begin{itemize}
    \item $M$ is \cpop,
    \item for any matching $N$, $|M|\ge \frac{2}{3}|N|$,
    \item for any \cpop\ matching $N$, $|M|\ge \frac{3}{4}|N|$, and
    \item for any \cstab\ matching $N$, $|M|\ge \frac{4}{5}|N|$.
\end{itemize}
\end{theorem}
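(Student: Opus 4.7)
The plan is to use a single symmetric-difference analysis for all four claims, relying on the stability of $M'$ in $I'$: for every edge $e=(u,w)\in E$ and every level $\alpha\in\{a,b,c,x,y,z\}$ with $\alpha(e)\notin M'$, at least one of $u,w$ strictly prefers its $M'$-partner to $\alpha(e)$ in the strict preferences of $I'$. This gives six independent blocking-type constraints per missing edge; the $b$- and $y$-copies encode the $\gamma$-improvement thresholds, while the $a,c,x,z$-copies encode ordinary strict improvement.

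To prove that $M$ is \cpop, I would take an arbitrary matching $N$, decompose $M\triangle N$ into alternating paths and cycles, and argue $\sum_{v\in V(C)}\votec_v(M,N)\ge 0$ for each component $C$. The decisive step is that for any edge $e=(u,w)\in N\setminus M$, stability of $M'$ on $b(e)$ and $y(e)$ forbids $\votec_u(M,N)=\votec_w(M,N)=-1$ simultaneously, since that would require mutual $\gamma$-improvement and hence yield a blocking pair at one of these copies. This suffices for cycle components; for alternating paths the same argument applied at the boundary on the $a$- or $z$-copy handles the single endpoint matched only in $N$.

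For the size comparisons, I would analyze the augmenting paths of $M\triangle N$: an augmenting path of length $2k+1$ contributes the ratio $(k+1)/k$ to $|N|/|M|$. The $\tfrac{2}{3}$ bound thus reduces to forbidding length-$3$ paths, which follows from stability of $M'$ on the extreme copies ($a$ or $z$ depending on orientation), since both endpoints of such a path are unmatched even at their most permissive level. The $\tfrac{3}{4}$ bound additionally forbids length-$5$ paths: assuming $N$ is \cpop, the local swap $N':=N\triangle E(C)$ on such a path $C$ would yield $\Dvotec(N,N')<0$, contradicting that $N$ is \cpop; the sign of the swap is dictated by the six-level labels carried by the two interior $M$-edges. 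The $\tfrac{4}{5}$ bound further forbids length-$7$ paths: with $N$ being \cstab, the levels of the three interior $M$-edges allow one to read off an edge of the path that $\gamma$-blocks $N$, contradicting that $N$ is \cstab.

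The main obstacle is the length-$5$ case analysis for the $\tfrac{3}{4}$ bound, where for each of the five path edges one must track which of its six copies were rejected by which endpoint in $M'$, and translate the $\gamma$-thresholds built into the $b,y$-levels into $\votec$-values on the alternative matching $N'$. The cascade $a\succeq^\gamma b\succ c\succ x\succeq^\gamma y\succ z$ is engineered precisely so that these cases decouple cleanly: $b$ and $y$ handle $\gamma$-sensitive comparisons while $a,c,x,z$ handle strict ones. Even so, several endpoint sub-cases (which side of the bipartition contains the unmatched endpoints, and at which levels the two interior $M$-edges sit) must be verified individually, in the spirit of the length-$5$ analysis of Kavitha \cite{kavitha2014size}.
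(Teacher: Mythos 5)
Your overall strategy coincides with the paper's: exploit stability of $M'$ in the six-copy instance to constrain the components of $M\triangle N$, rule out length-$3$ augmenting paths for the $\frac{2}{3}$ bound, and kill the $2$-vs-$3$ (resp.\ $3$-vs-$4$) path components by exhibiting a matching that $\gamma$-dominates $N$ (resp.\ an edge that $\gamma$-blocks $N$) for the $\frac{3}{4}$ and $\frac{4}{5}$ bounds. Those three size arguments are sketched along the paper's lines and, modulo the deferred case analysis, are fine.

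However, the proof of the first bullet rests on a false step. It is \emph{not} true that stability of $M'$ at $b(e)$ and $y(e)$ forbids $\votec_u(M,N)=\votec_w(M,N)=-1$ for $e=(u,w)\in N\setminus M$. Write $f=M(u)$, $g=M(w)$. Under mutual $\gamma$-improvement, $b(e)$ beats every copy of $f$ at $u$ (because it beats $a(f)$), but at $w$ it only beats the copies $c(g),b(g),a(g)$; dually, $y(e)$ beats every copy of $g$ at $w$ (because it beats $z(g)$) but at $u$ it only beats $x(f),y(f),z(f)$. Hence if $M'(f)\in\{a(f),b(f),c(f)\}$ while $M'(g)\in\{x(g),y(g),z(g)\}$, neither $b(e)$ nor $y(e)$ blocks $M'$, and the edge $e$ contributes $-2$ to the vote count; this is precisely case 4 of the paper's key claim, and the entire subsequent argument exists because such edges cannot be excluded. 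Consequently your assertion that forbidding $(-1,-1)$ edges ``suffices for cycle components,'' and that the path boundary is handled by the $a$- or $z$-copy, does not close the argument. The paper instead labels each $M$-edge by which half ($a/b/c$ versus $x/y/z$) its chosen copy lies in, shows that each $N$-edge contributes $\ge 0$, $\ge -2$, or exactly $+2$ according to the labels of the two adjacent $M$-edges, and then runs a compensation argument: a surplus of $-2$ edges over $+2$ edges forces the component to be a path whose first $M$-edge (from the $U$ side) has an $x/y/z$ copy and whose last has an $a/b/c$ copy, so both ends of the path are covered by $M$ only and supply two additional $+1$ votes, yielding $\Dvotec(M\cap P,N\cap P)\ge 0$. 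This counting step is the essential missing ingredient in your proposal; without it, even the existence of a \cpop{} matching is not established.
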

\begin{proof}

Let $M$ be the output of the algorithm and $M'$ be its preimage from Step 2.
 For an edge $e\in M$, we use the notation $M'(e)$ to denote the copy of $e$ that is included in $M'$.  Now, we start by showing that $M$ is indeed \cpop.

\begin{claim}
\label{claim:key}
    Let $N$ be any matching and let $e=(u,w)\in N$ be an edge. Then, the following hold:
    \begin{enumerate}
      
        \item At least one of $\{ u,w\}$ are matched \new{in $M$}.
        \item If $u$ or $w$ is unmatched in $M$, then $\votec_u (M,N)+\votec_w(M,N)=0$. Furthermore, in the first case, $M'(w)$ is of type $y$ or $z$ and in the second case, $M'(u)$ is of type $a$ or $b$.
        \item If $M'(u)$ is of type $a,b$ or $c$ and $M'(w)$ is of type $a,b$ or $c$, then $\votec_u (M,N)+\votec_w(M,N)\ge 0$. 
        \item If $M'(u)$ is of type $a,b$ or $c$ and $M'(w)$ is of type $x,y$ or $z$, then $\votec_u (M,N)+\votec_w(M,N)\ge -2$. 
        \item If $M'(u)$ is of type $x,y$ or $z$ and $M'(w)$ is of type $a,b$ or $c$, then $\votec_u (M,N)+\votec_w(M,N)=+2$. 
        \item If $M'(u)$ is of type $x,y$ or $z$ and $M'(w)$ is of type $x,y$ or $z$, then $\votec_u (M,N)+\votec_w(M,N)\ge 0$. 
        
    \end{enumerate}
\end{claim}
\begin{claimproof}
    1. If $u,w$ are both unmatched, then this means that $M$ and thus $M'$ is not maximal, contradicting that it is stable. 

    2. Suppose that $u$ is unmatched in $M$, hence also in $M'$. Then, as the edge $z(e)$ does not block, we get that $M'(w)$ is either an edge of type $y$ or of type $z$. Let $M(w)=g$. Suppose that $\votec_w(M,N)\ne +1$. Then, $p_w(e)\ge p_w(g) + \gamma_e^w$. This implies that $z(e)\succ_w z(g)\succeq_w M'(w)$, so $z(e)$ blocks $M'$, contradiction. Hence,  $\votec_w(M,N)= +1$, so $\votec_u (M,N)+\votec_w(M,N)=0$. The case when $w$ is unmatched is similar. 

    From now on, let $f=M(u)$ and $g=M(w)$.

    3. As $\votec_u (M,N)\ne 0$ and $\votec_w (M,N)\ne 0$, it suffices to show that $\votec_u (M,N)=\votec_w (M,N)=-1$ is impossible. Suppose to the contrary that this holds. Then, we know that $p_u(e)\ge p_u(f)+\gamma_e^u$ and $p_w(e)\ge p_w(g) + \gamma_e^w$. From this, we get that $b(e)\succ_u a(f)\succeq_u M'(u)$ and $b(e)\succ_w c(g)\succeq_w M'(w)$, therefore $b(e)$ blocks $M'$, which is a contradiction. 

    4. This is trivial, as both votes are at least $ -1$.

    5. Suppose that $\votec_u (M,N) =-1$. Then, $p_u(e)\ge p_u(f) + \gamma_e^u$. Hence, $y(e)\succ_u x(f)\succeq_u M'(u)$ and $y(e)\succ_w c(g)\succeq M'(w)$, therefore $y(e)$ blocks $M'$, contradiction. If $\votec_w (M,N) =-1$, then $p_w(e)\ge p_w(g) + \gamma_e^w$ and so $b(e)$ blocks $M'$, a contradiction.

    6. The proof of this last statement is analogous to the proof of point 3. 
\end{claimproof}

\begin{claim}
$M$ is \cpop.
\end{claim}
\begin{claimproof}
Suppose to the contrary that there is a matching $N$, such that $\Dvotec (M,N)<0$. Consider the symmetric difference of the two matchings, $M\triangle N$. Then, there must be a component $P$ of $M\triangle N$, such that $\Dvotec (M\cap P, N\cap P)<0$.

For each edge $f\in M\cap P$, let $t(f)=0$, if $M'(f)\in \{ a(f),b(f),c(f)\}$ and $t(f)=1$ if $M'(f)\in \{ x(f),y(f),z(f)\} $. Then, for any edge $e=(u,w)\in N$, if it is between $M$-edges $f=M(u)$ and $g=M(w)$ with $t(f)=t(g)$, by Claim \ref{claim:key} we obtain that $\votec_u (M,N) +\votec_w (M,N) \ge 0$. If $t(f)=1-t(g)=0$, then $\votec_u (M,N) +\votec_w (M,N) \ge -2$ and if $t(f)=1-t(g)=1$, then $\votec_u (M,N) +\votec_w (M,N) = +2$. 

Therefore, from $\Dvotec (M\cap P , N \cap P)<0$ and Claim \ref{claim:key} point 2, we get that there must be strictly more edges $e=(u,w)\in N\cap P$ with $t(M(u))=1-t(M(w))=0$ than with $t(M(u))=1-t(M(w))=1$. In particular, $P$ must be a path.

From the two possible orderings of the vertices (and hence also the edges) of path $P$, consider the one where the first vertex that is matched in $M$ is from $U$, and hence the last vertex matched by $M$ is from $W$. From now on, by the $k$-th edge of $M\cap P$, we mean the $k$-th $M$-edge in this ordering. 

Now, by the above, the first edge of $M$ has a type $x,y$ or $z$ copy in $M'$, and the last edge has a type $a,b$ or $c$. However, this implies that both the first and last edges of $P$ must be $M$-edges. Indeed, if the first edge were an $N$-edge, then the first vertex is an agent $w_1$, and by Claim \ref{claim:key} point 2, we obtain that the first $M$-edge must be of type $a,b$ or $c$, which is a contradiction. Similarly, if the last edge is an $N$-edge, then the last $M$-edge has to be of type $x,y$ or $z$, which is a contradiction. Hence, there are two vertices in $P$ that are only covered in $M$, and zero vertices that are only covered in $N$, so even though $\sum_{v\in V(N\cap P)}\votec_v(M,N)\ge -2$, by considering the votes of the two vertices in $V(M\cap P)\setminus V(N\cap P)$, we obtain that $\Dvotec (M\cap P,N\cap P)\ge 0$, which is a contradiction. 

Therefore, we conclude that $M$ is \cpop.
\end{claimproof}

Now we proceed to prove the approximation guarantees. 

\begin{claim}
    Let $N$ be any matching. Then, $|M|\ge \frac{2}{3}|N|$.
\end{claim}
\begin{claimproof}
\new{For any matching $N$, the symmetric difference $M\triangle N$ consists of alternating paths and cycles. Hence, it is enough to show that there is no such component $P$, where $|M\cap P|<\frac{2}{3}|N\cap P|$. Since $|M\cap P|\ge |N\cap P|-1$, this can only happen if $P$ is an alternating path with either $1$ $M$-edge and 2 $N$-edges, or a single $N$-edge.}

Clearly, $M$ is maximal, as otherwise $M'$ would not be stable, so the latter is impossible. Suppose to the contrary that there is a path $P$ in $M\triangle N$ that contains two $N$-edges $f,g$ and a middle $M$-edge $e=(u,w)$, where $f=M(u), g=M(w)$. As $M'$ was stable, we obtain that $a(f)$ does not block $M'$, and therefore $M'(e)\in \{ a(e),b(e)\}$. Similarly, as $z(g)$ does not block $M'$, we get that $M'(e)\in \{ z(e),y(e)\}$, which is a contradiction. 
\end{claimproof}

\begin{claim}
    Let $N$ be any \cpop\ matching. Then, $|M|\ge \frac{3}{4}|N|$.
\end{claim}
\begin{claimproof}
    We have already seen in Claims 2-3 that there can be no component in $N\triangle M$ with $i$ $M$-edges and $i+1$ $N$-edges (for $i\in\{0,1\}$).
    
    Suppose, for the contrary, that there is a component $P$ in $M\triangle N$ with $2$ $M$-edges and $3$ $N$-edges. Let $P=\{ w_1,u_1,w_2,u_2,w_3,u_3\}$, and let $e_i=(w_i,u_i)\in N$, for $i=1,2,3$ and $f_j=(u_j,w_{j+1})\in M$, for $j=1,2$. 
    
    As the edge $a(e_1)$ does not block $M'$ and the edge $z(e_3)$ does not block $M'$, we get that $M'(f_1)\in \{ a(f_1),b(f_1)\} $ and $M'(f_2)\in \{ z(f_2),y(f_2)\}$.

    Suppose that $M'(f_1) = a(f_1)$. Then, $b(e_2)$ blocks $M'$, which is a contradiction. Similarly, if $M'(f_2)=z(f_2)$, then $y(e_2)$ blocks $M'$, which is again a contradiction. This also means that $p_{u_1}(f_1)\ge p_{u_1}(e_1)+\gamma_{f_1}^{u_1}$ and $p_{w_3}(f_2)\ge p_{w_3}(e_3)+\gamma_{f_2}^{w_3}$ (using that $a(e_1)$ and $z(e_3)$ do not block $M'$). 

    We conclude that $M'(f_1)= b(f_1), M'(f_2)=y(f_2)$. Now, because $x(e_2)$ does not block $M'$, we get that $p_{u_2}(f_2)\ge p_{u_2}(e_2) + \gamma_{f_2}^{u_2}. $ The same argument for $c(e_2)$ gives $p_{w_2}(f_1)\ge p_{w_2}(e_2)+\gamma_{f_1}^{w_2}$. 

    Putting this together, we find that if we compare the matching $N$ to $\hat{N}=N\setminus \{ (w_i,u_i) \mid i=1,2,3\} \cup \{ (u_i,w_{i+1})\mid i=1,2\}$, then agents $u_1,w_2,u_2$ and $w_3$ vote with $-1$, as their improvement in $\hat{N}$ is large enough from $N$ and only $w_1$ and $u_3$ vote with $+1$, all other agents vote with $0$. Hence, $\Dvotec (N,\hat{N})<0$, contradicting that $N$ is \cpop. 
\end{claimproof}

\begin{claim}
    For any \cstab\ matching $N$, we have that $|M|\ge \frac{4}{5}|N|$. 
\end{claim}
\begin{claimproof}
    Let $N$ be a \cstab\ matching. As any \cstab\ matching is also \cpop, by the previous proofs, there can be no component in $N\triangle M$ with $i$ $M$-edge and $i+1$ $N$-edge (for $i\in\{0,1,2\}$).

    Suppose, for the contrary, that there exists a component $P$ in $M\triangle N$ with $3$ $M$-edges and $4$ $N$-edges. Let $P=\{ w_1,u_1,w_2,u_2,w_3,u_3,w_4,u_4\}$, where $e_i=(w_i,u_i)\in N$ for $i=1,2,3,4$ and $f_j=(u_ijw_{j+1})\in M$, for $j=1,2,3$. 

    Similar arguments as before give that $M'(f_1)\in \{ a(f_1),b(f_1)\}$ and $M'(f_3)\in \{ z(f_3),y(f_3)\}$. Now, it is clear that $M'(f_2)$ is either an $a,b$ or $c$ copy, or an $x,x$ or $z$ copy. Suppose that $M'(f_2) \in \{ a(f_2),b(f_2),c(f_2)\}$. Then, $w_3$ prefers $x(e_3)$ to $M'(f_2)$. As $x(e_3)$ does not block $M'$, we get that it must be dominated at $u_3$, so $M'(f_3)=y(f_3)$ and $p_{u_3}(f_3)\ge p_{u_3}(e_3)+\gamma_{f_3}^{u_3}$. Now knowing that $M'(f_3)=y(f_3)$, then we get that $p_{w_4}(f_3)\ge p_{w_4}(e_4)+\gamma_{f_3}^{w_4}$, as $z(e_4)$ does not block. Combining this, we get that $f_3$ \cgblocks\ $N$, which is a contradiction. 

    The case for $M'(f_2) \in \{ x(f_2),y(f_2),z(f_2)\}$ is symmetric and leads to $f_1$ being a \cgblocking\ edge to $N$, which is a contradiction. 
\end{claimproof}

The theorem follows from the above claims. 
\end{proof}

We now show that our approximation analysis is tight in all three cases, even in the case of ordinary weak preferences and one-sided ties.

\begin{example}
    Consider an instance with agents $\{ w_1,u_1,w_2,u_2,w_3,u_3\}$. Let the edges be $E\cup F$, where $E=\{ e_i=(u_i,w_i)\mid i=1,2,3\}$, $F=\{ f_i=(u_i,w_{i+1})\mid i=1,2\}$. \nd{Define the preferences such that all agents with degree $2$ strictly prefer the $f$ edge to $e$ edge. In this instance, all preferences are strict, so weak popularity and popularity coincide. It is easy to see that the only popular matching is $M=F$,} so this must be the output of the algorithm. However, for $N=E$, $\frac{2}{3}|N|=|M|$.
\end{example}

\begin{example}
    Consider an instance with agents $\{ w_1,u_1,w_2,u_2,w_3,u_3,w_4,u_4\}$. Let the edges be $E\cup F$, where $E=\{ e_i=(u_i,w_i)\mid i=1,2,3,4\}$, $F=\{ f_i=(u_i,w_{i+1})\mid i=1,2,3\}$. Let the preference lists be given by 
    \new{\[
    \begin{minipage}{0.45\textwidth}
    \begin{align*}
    u_1&: f_1\succ e_1 \\
    u_2&: f_2\succ e_2 \\
    u_3&: f_3\succ e_3 \\
    u_4&: e_4
    \end{align*}
    \end{minipage}
    \hfill
    \begin{minipage}{0.45\textwidth}
    \begin{align*}
    w_1&: e_1 \\
    w_2&: f_1\sim e_2 \\
    w_3&: f_2\sim e_3 \\
    w_4&: f_3\succ e_4
    \end{align*}
    \end{minipage}
    \]
    where $\succ$ indicates a strict preference and $\sim$ indicates a tie.} First, we show that $N=E$ is weakly popular. It is easy to see that only agents $u_1,u_2,u_3$ and $w_4$ may vote $-1$ when comparing $N$ to any matching $M$. However, whenever $u_1$ improves, $w_1$ is worse off and votes with $+1$. Similarly, whenever $u_2$ or $u_3$ improves, $w_2$ and $w_3$ are weakly worse off, respectively, so they vote with $+1$. Finally, if $w_4$ improves, then $u_4$ is worse off and votes with $+1$. Hence, the number of $+1$ votes is always at least the number of $-1$ votes, showing that $N$ is weakly popular. 

    Next, we show that the algorithm may return $M=F$. Suppose that $M'(f_1)=b(f_1),M'(f_2)=c(f_2)$ and $M'(f_3)=y(f_3)$. Furthermore, suppose that the ties are broken up in the following way: $w_2:f_1\succ e_2$, $w_3:f_2\succ e_3$, which implies that $b(f_1)\succ_{w_2}b(e_2)$ and $c(f_2)\succ_{w_3}c(e_3)$.
    It is straightforward to verify that $M'$ is stable, and $M'$ gives $M$ when we project it back. As $|M|=\frac{3}{4}|N|$, we get that this approximation ratio is tight for the algorithm. 
\end{example}

\begin{example}
    Consider an instance with agents $\{ w_1,u_1,w_2,u_2,w_3,u_3,w_4,u_4,w_5,u_5\}$. Let the edges be $E\cup F$, where $E=\{ e_i=(u_i,w_i)\mid i=1,2,3,4,5\}$, $F=\{ f_i=(u_i,w_{i+1})\mid i=1,2,3,4\}$. Let the preference lists be given by
    \new{\[
    \begin{minipage}{0.45\textwidth}
    \begin{align*}
    u_1&: f_1\succ e_1 \\
    u_2&: f_2\succ e_2 \\
    u_3&: f_3\succ e_3 \\
    u_4&: e_4\succ f_4 \\
    u_5&: e_5
    \end{align*}
    \end{minipage}
    \hfill
    \begin{minipage}{0.45\textwidth}
    \begin{align*}
    w_1&: e_1 \\
    w_2&: f_1\sim e_2 \\
    w_3&: f_2\sim e_3 \\
    w_4&: f_3\sim e_4 \\
    w_5&: f_4\succ e_5
    \end{align*}
    \end{minipage}
    \]}
    
    We claim that $N=E$ is stable. Indeed, no edge in $F$ blocks $N$. Now let $M=F$. We claim that the algorithm may return $F$. Let $M'(f_1)=b(f_1), M'(f_2)=c(f_2), M'(f_3)=x(f_3)$ and $M'(f_4)=y(f_4)$. Suppose that we break the ties in the following manner: $w_2: f_1\succ e_2$, $w_3: f_2\succ e_3$, $w_4:f_3\succ e_4$. Then, it is straightforward to verify that $M'$ is stable and that $M'$ gives $M$. Also, as $|M|=\frac{4}{5}|N|$, our approximation analysis is tight in this case, too. 
\end{example}

\subsection{Inapproximability result}

Finally, we highlight that, assuming either the strong-Unique Games Conjecture (strong-UGC) or the Small Set Expansion Hypothesis (SSEH), it is NP-hard to approximate \wpopOpt\ within a factor of $\frac{3}{4}+\varepsilon$ (for any $\varepsilon >0$). 

\begin{restatable}{theorem}{wpopinappr}
\label{thm:wpopinappr}
    Assuming either the strong-UGC or the SSEH, it is NP-hard to approximate \wpopOpt\ within a factor of $(\frac{3}{4}+\varepsilon)$, for any $\varepsilon >0$.
\end{restatable}

We utilise that \cite{smti1.5inapprox} have shown the following result. 

\begin{theorem}[\cite{smti1.5inapprox}]
\label{thm:EMM-inappr}
    If the strong-UGC or the SSEH holds, then for any $\varepsilon >0$, it is NP-hard to distinguish for a bipartite graph $G=(U,W,E)$ with $|U|=|W|=n$ the following two cases: \begin{itemize}
        \item There is a maximal matching in $G$ of size at most $(\frac{1}{2}+\varepsilon )n$ or
        \item every maximal matching in $G$ has at least $(1-\varepsilon )n$ edges.
    \end{itemize}
\end{theorem}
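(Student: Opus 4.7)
The plan is to prove Theorem~\ref{thm:EMM-inappr} by a gap-preserving reduction from a $(2-\varepsilon)$-inapproximable source problem. The statement is equivalent to showing that, under strong-UGC or SSEH, minimum maximal matching (equivalently, minimum edge dominating set) on balanced bipartite graphs is NP-hard to approximate within any factor $2-\varepsilon$. The most natural hardness source is Minimum Vertex Cover on suitably structured (e.g., almost-regular) graphs: $(2-\varepsilon)$-hardness under UGC is the classical Khot--Regev theorem, and an SSEH counterpart can be obtained via the Raghavendra--Steurer SSE-to-UGC transformation applied to the vertex cover reduction, or via a direct small-set-expansion-based reduction on expanding instances.

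Starting from a Vertex Cover instance $H=(V,F)$ with the hard gap ``$\tau(H)\le(\tfrac{1}{2}+\varepsilon)|V|$'' versus ``$\tau(H)\ge(1-\varepsilon)|V|$'' (where $\tau$ denotes minimum vertex cover size), I would construct a bipartite graph $G=(U,W;E)$ with $|U|=|W|=n=\Theta(|V|)$ via a gadget that encodes each $v\in V$ by a local bipartite sub-structure and each edge $\{u,v\}\in F$ by connections between the gadgets of $u$ and $v$. The gadget must ensure \emph{completeness}: given a small vertex cover $C$, one constructs a maximal matching of size $(\tfrac{1}{2}+O(\varepsilon))n$ by choosing ``cover-witness'' edges at each $v\in C$ to dominate all $F$-edges, and filling out with the gadget's default (e.g.\ ``personal'') edges elsewhere. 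And it must ensure \emph{soundness}: from any maximal matching $M$ one extracts a vertex cover $C_M$ of $H$ whose size is tightly controlled by $|M|$ up to the gadget's scaling factor, forcing $|M|\ge(1-\varepsilon)n$ whenever $\tau(H)\ge(1-\varepsilon)|V|$.

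The main obstacle is engineering the gadget's scaling so that the gap on the maximal-matching side matches exactly $(\tfrac{1}{2}+\varepsilon)n$ versus $(1-\varepsilon)n$. A naive ``two copies of $v$ with a personal edge, plus crossing edges per edge in $F$'' construction typically yields only the soundness bound $|M|\ge\tau(H)/2$, which is too weak by a factor of two; to avoid this one likely needs a gap-amplifying gadget (for example, an expander-augmented construction, or a long-code / dictator-style gadget tailored to the SSE/UGC source) together with careful padding to enforce $|U|=|W|$ exactly. A secondary technical challenge is handling both hypotheses uniformly: the SSEH route may require invoking a direct SSE-hardness result for vertex cover on almost-regular graphs, or a parallel SSE-hard intermediate such as bisection, rather than chaining through UGC, in order to keep the gadget construction and the linear constants consistent in both proofs.
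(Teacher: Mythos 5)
This statement is not proved in the paper at all: it is quoted verbatim from Dudycz et al.\ \cite{smti1.5inapprox} and used as a black box, so there is no in-paper proof to compare against. Judged on its own terms, your proposal is a plan rather than a proof, and it has a genuine gap that you yourself flag. You correctly observe that the theorem amounts to $(2-\varepsilon)$-inapproximability of minimum maximal matching on balanced bipartite graphs, and reducing from Khot--Regev-style vertex cover hardness is indeed the natural starting point. But the entire difficulty of the result lives in the step you leave unspecified: as you note, the straightforward ``personal edge plus crossing edges'' gadget only yields the soundness bound $|M|\ge \tau(H)/2$, which loses exactly the factor of two that the theorem is about, and you then appeal to an unnamed ``gap-amplifying gadget (for example, an expander-augmented construction, or a long-code / dictator-style gadget)'' without constructing it or arguing completeness and soundness for it. That missing gadget \emph{is} the theorem --- it is the main technical contribution of Dudycz et al.\ --- so the proposal as written does not establish the statement. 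The same issue recurs in your treatment of the SSEH branch, where you list several possible routes (Raghavendra--Steurer transformation, a direct SSE reduction, an SSE-hard intermediate) without committing to or verifying any of them.

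A smaller point: the theorem as stated in the paper asserts hardness of distinguishing ``some maximal matching of size at most $(\frac12+\varepsilon)n$'' from ``every maximal matching has size at least $(1-\varepsilon)n$,'' i.e.\ a gap in the \emph{minimum} maximal matching with explicit additive constants tied to $n=|U|=|W|$. Your reduction sketch would additionally need to control the gadget's size so that these constants come out exactly as stated (your own ``careful padding to enforce $|U|=|W|$'' remark), and this bookkeeping is also left open. In short: right target, right source problem, but the core construction is absent.
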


Now, we are ready to prove Theorem~\ref{thm:wpopinappr}. 

\begin{proof}[Proof of Theorem~\ref{thm:wpopinappr}]
    We show that for any $\varepsilon >0$, such an approximation algorithm for \wpopOpt\ implies that we can distinguish between the two cases of Theorem \ref{thm:EMM-inappr}, which is a contradiction. 

    Let $\varepsilon >0$ and let $G=(U,W,E)$ be a bipartite graph with $|U|=|W|=n$, where $U=\{ u_1,\dots,u_n\}$ and $W=\{ w_1,\dots,w_n\}$, call this instance $I$. We may assume that $n$ is even, otherwise we duplicate $G$ and get an equivalent problem. We create an instance $I'$ of \wpopOpt. We create sets of agents $W'=\{ w_1',\dots,w_n'\}$, $U'=\{ u_1',\dots,u_{\frac{n}{2}}'\}$, $Z= \{ z_1,\dots,z_{\frac{n}{2}}\}$ and $Z'= \{ z_1',\dots,z_{\frac{n}{2}}'\}$. We create the edges of the instance in $I'$, by keeping the edges induced by $G$, and adding $\{ (w_i,w_i')\mid i\in [n]\} \cup \{ (u_j',z_j),(z_j,z_j')\mid j\in [\frac{n}{2} ] \} \cup \{ (u_i,u_j')\mid i\in [n],j\in [\frac{n}{2}]\}$.

    The preferences are illustrated in Figure \ref{tab:inappproxprefs}. Because the created graph is simple, we write the weak preferences over the set of neighbors $V(u)$ for an agent $u$ instead of over $E(u)$. $V_G(v)$ denotes the set of neighbors of $v$ in $G$, and $[X]$ for a set $X$ means that the elements of $X$ form a single tie in the preference list.

    \begin{figure}
        \centering
        \begin{tabular}{cl||cl}
            $u_i:$ & $[V_G(u_i)]\succ [U']$ & $w_i:$ & $[V_G(w_i)] \succ w_i'$  \\
             $w_i':$ & $w_i$ & $u_j':$  & $z_j\succ [U]$ \\
            $z_j:$ & $u_j'\succ z_j'$ & $z_j':$ & $z_j$ \\
        \end{tabular}
        \caption{The created preferences in Theorem \ref{thm:wpopinappr}. Here, $i\in [n],j\in [\frac{n}{2}]$}
        \label{tab:inappproxprefs}
        \Description{A two-sided preference construction.}
    \end{figure}

Let $k\ge \frac{n}{2}$. 

\begin{claim}
\label{otherdir}
    Suppose that there is a \wpop\ matching $M'$ in $I'$ of size at least $\frac{5}{2}n-k$. Then, there is a maximal matching in $G$ of size at most $k$.
\end{claim}
\begin{claimproof}
    Let $M$ be the restriction of $M'$ to $G$. Suppose that $|M|=k'>k$. Then, we get that the size of $M'$ is at most $(n-k')+k' + (n-k')+\frac{n}{2}=\frac{5}{2}n-k'<\frac{5}{2}n-k$, which is a contradiction. 

    Therefore, it is enough to show that $M$ is maximal. Suppose that $(u_i,w_j)$ is an edge of $G$ that can be added to $M$. If $u_i$ or $w_j$ is unmatched in $M'$, then at most one of them leaves a partner alone if they get matched together and strictly improve, so we obtain a matching that dominates $M'$, which is a contradiction. 

    Hence, $(w_j,w_j')\in M'$ and $(u_i,u_l')\in M'$ for some $l\in [\frac{n}{2}]$. Also, $(z_l,z_l')\in M'$ since a \wpop\ matching must be maximal. Let $N'$ be the matching $M'\setminus \{ (w_j,w_j'),(u_i,u_l'),(z_l,z_l')\cup \{ (u_i,w_j),(u_l',z_l)\}$. Then, $u_i,w_j,u_l',z_l$ all improve and vote with $-1$ and only $w_j'$ and $z_l'$ vote with $+1$. Hence, $N'$ dominates $M'$, which is a contradiction. 
\end{claimproof}

\begin{claim}
\label{onedir}
    If there is a maximal matching of size $k$ in $G$, then there is a \wpop\ matching of size at least $\frac{5}{2}n-k$ in $I'$.
\end{claim}
\begin{claimproof}
    Let $M$ be such a matching of $G$. Extend it to a matching $M'$ of $I'$ as follows. For each $i\in [n]$, such that $w_i$ is unmatched in $M$, add $(w_i,w_i')$. Let $\{ u_{i_1},\dots ,u_{i_{n-k}}\}$ be the unmatched agents in $U$ by $M$. Then, we add the edges $(u_{i_j},u_j'),(z_j,z_j')$ for $j\in [n-k]$. As $n-k\le \frac{n}{2}$, we can do this. Finally, if $n-k<\frac{n}{2}$, then for each $n-k<j\le \frac{n}{2}$ we add $(u_j',z_j)$. Clearly $|M'|=(n-k)+k+2(n-k)+(\frac{n}{2}-(n-k))=\frac{5}{2}n-k$. 

    We show that $M'$ is \wpop. From the fact that $M$ is inclusion-wise maximal, we get that the only blocking edges to $M'$ are the edges $(u_j',z_j)$, $j\in [n-k]$. Therefore, these are the only possible $(-1,-1)$ voting edges in another matching. 

    Suppose that a matching $N'$ dominates $M'$. Consider the components of the symmetric difference $N'\triangle M'$. Then, there is a component such that within this component, $N'$ also dominates $M'$. In such a component, there must be an edge $e$ of $N'$, such that both of its endpoints vote with $-1$. As we have seen, such an edge must be $(u_j',z_j)$, for some $j\in [n-k]$. As for a vertex $z_j$, its only other neighbor is $z_j'$, which is a leaf vertex, we get that at most two such edges can be in any component, and the component cannot be a cycle. Also, if $(u_j',z_j)\in N'$, then $z_j'$ is unmatched in $N'$, but is matched in $M'$, so such edges can only be in components in $M'\triangle N'$, where they are adjacent to an $M'$-edge, whose other endpoint is only covered by $M'$. 

    Let $(u_j',z_j)\in N'$. If $u_{i_j}$ is unmatched in $N'$, then they and $z_j'$ vote with $+1$ in the component and only $u_j'$ and $z_j$ vote with $-1$, which is a contradiction.
    
    If $(u_{i_j},u_l')\in N'$, then both of them vote with $+1$, so there is a $(+1,+1)$ edge. Therefore, even if there are two $(-1,-1)$ edges of $N'$ in the component, the $(+1,+1)$ edge and the two $+1$ votes by the endpoints only covered by $M'$ cancel those, which is a contradiction. 

    So $(u_{i_j},w_i)\in N'$ for some $i\in [n]$. As $M$ was maximal and $(u_{i_j},u_j')\in M'$, we get that $w_i$ was matched in $M$, so $(u_l,w_i)\in M'$ for some $l\in [n]$. This implies that $u_l$ and $w_i$ both vote with $+1$ in any case. Now, if $u_l$ is unmatched in $N'$, then there is only one $(-1,-1)$ edge in the component, but both endpoints of the component vote with $+1$, which is a contradiction. If $u_l$ and their partner in $N'$ both vote with $+1$, then we have a $(+1,+1)$ edge, and we get a contradiction by a similar reasoning as before. Otherwise, the partner of $u_l$ in $N'$ must strictly improve from $M'$, hence $(u_l,w_m)\in N'$ for an agent $w_m$ who was unmatched in $M$, so $(w_m,w_m')\in M'$. As $w_m'$ is a leaf, we get that in this case, this is the whole component. However, as both endpoints are only covered by $M'$, we get that those two $+1$ votes cancel the $(-1,-1)$ edge, which is a contradiction. 

    Hence, we conclude that $M'$ is \wpop.
\end{claimproof}
\medskip

Now suppose for contradiction that there is a $(\frac{3}{4}+\varepsilon)$-approximation algorithm for \wpopOpt. We may assume that $n$ is larger than $\frac{1}{\varepsilon}$. Now, if every maximal matching of $G$ has size at least $(1-\varepsilon )n$, then any \wpop\ matching in $I'$ has size at most $\frac{5}{2}n-(1-\varepsilon )n$ by Claim \ref{otherdir}. Also, if there is a maximal matching of size at most $(\frac{1}{2}+\varepsilon)n>\frac{n}{2}+1$ edges, then there is one with $\frac{n}{2}\le k<(\frac{1}{2}+\varepsilon)n$ edges (if $k<\frac{n}{2}$, then we increase it by one via an alternating path and get a larger maximal matching, as we may assume that $G$ has a matching of size at least $\frac{n}{2}$, otherwise distinguishing the cases is easy), and by Claim \ref{onedir} we have that there is a weakly popular matching of size at least $\frac{5}{2}n-k\ge (2-\varepsilon )n$. 

Now, because $(2-\varepsilon )n \cdot (\frac{3}{4}+\varepsilon )= \frac{3}{2}n-\frac{3}{4}\varepsilon n+ 2\varepsilon n - \varepsilon^2 n>(\frac{3}{2}+\varepsilon )n=\frac{5}{2}n-(1-\varepsilon )n$ for $\varepsilon <\frac{1}{2}$, we get that by Claim \ref{onedir}, if the first case of Theorem \ref{thm:EMM-inappr} holds, then we can find a maximal matching of size strictly less than $(1-\varepsilon )n$, which is a contradiction. 
\end{proof}

\section{Verification and exact computation}
\label{sec:verify}

\subsection{\nd{Verifying $\gamma$-popularity}}

\nd{We now turn our attention to the verification of popularity. We start with a positive result regarding the verification of $\gamma$-popularity, which also applies to weak popularity (as $\gamma$-popularity is strictly more general).}

\begin{theorem}
    We can decide if a matching is \cpop\ \nd{in an instance with $n$ agents in $O(n^3)$} time.
\end{theorem}
\begin{proof}
    Consider an instance consisting of a graph $G=(U,W;E)$, valuations $(p_v())_{v\in U\cup W}$, and numbers $\gamma_e^v$ for every $(e,v)\in E\times (U\cup W)$ with $v\in e$.

    Let $M$ be a matching of the instance and suppose without loss of generality that $|U|\ge |W|$. We create a complete bipartite graph $G_M=(U,W';E')$ from $G$ and $M$, where $W'$ is obtained from $W$ by adding $|U|-|W|$ dummy vertices \new{and $E'=U\times W'$}, and a weight function $\omega$ on the edges $E'$ of $G_M$ as follows. Clearly, $M$ is also a matching in $G_M$. For each $e=(u,w)\in E'$, if also $e\in E$, then let $\omega (u,w)=\votec_u (M(u),e)+\votec_w (M(w),e)$. Otherwise, let $\omega (u,w) = \votec_u( M(u),\emptyset) +\votec_w(M(w),\emptyset)$.
    
    Then, for any matching $N$ in $G$, we can extend it to a perfect matching $N'$ in $G_M$, such that \new{
    \begin{align*}
        \Dvotec (M,N) &=\sum\limits_{v\in U\cup W}\votec_v(M(e),N(e))=\sum\limits_{e\in N}\omega (e) + \sum\limits_{v:N(v)=\emptyset}\votec_v(M(v),\emptyset)\\
        &\ge \omega(N) + \sum\limits_{e\in N'\setminus N}\omega(e)=\sum\limits_{e\in N'}\omega (e)=\omega (N'),
    \end{align*}}
    using that, for any $e\in E$, $\votec_v(M(v),\emptyset)\ge \votec_v(M(v),e)$. Furthermore, if $N'\setminus N$ only contains edges from $E'\setminus E$, then we have $\Dvotec(M,N)=\omega (N')$.

    Therefore, on one hand, if there is a matching $N$ in $G$ such that $\Dvotec(M,N)<0$, then there exists a perfect matching $N'$ in $G_M$ with negative $\omega$-weight. On the other hand, if there exists a perfect matching $N'$ in $G_M$ with $\omega (N')<0$, then taking $N=N'\cap E$, we have $\Dvotec(M,N)=\omega (N')$ by the above, so $N$ dominates $M$. 

    Hence, deciding if $M$ is \cpop\ reduces to a minimum weight perfect matching problem, which can be solved \nd{in at most cubic time in the number of vertices for bipartite graphs} with the well-known Hungarian method \cite{hungarian}.
\end{proof}

\subsection{Exact algorithm \new{for \cpopOPT}}
\label{sec:exactalgo}

\nd{Now let us turn to the computation of optimal maximum-size $\gamma$-stable and $\gamma$-popular matchings.} We use the notation $e\triangleleft_v f$ to denote that $p_v(e)\ge p_v(f)+\gamma_e^v$. Now consider the following ILP. 

\[
\begin{aligned}
\text{maximize}\quad & \sum_{e\in E} x_e \\
\text{subject to}\quad
& \sum_{e\in E(v)} x_e \le 1
    \qquad\forall v\in U\cup W
    &&\\[6pt]
& \sum_{\substack{f\in E(u)\\ e \,\ntriangleleft_u, f}} x_f
  + \sum_{\substack{f\in E(w)\\ e \,\ntriangleleft_w, f}} x_f
  + x_e \ge 1
    \qquad\forall e=(u,w)\in E
    &&\text{(no blocking edges)}\\[6pt]
& x_e\in\{0,1\}\qquad\forall e\in E.
\end{aligned}
\]

Clearly, any solution to this ILP corresponds to the characteristic vector of a matching. Furthermore, it is easy to see that $\sum_{\substack{f\in E(u)\\ e \,\ntriangleleft\, f}} x_f
  + \sum_{\substack{f\in E(w)\\ e \,\ntriangleright\, f}} x_f
  + x_e \ge 1$ 
is satisfied for a characteristic vector $x$ of a matching $M$ if and only if $e$ is not $\gamma$-min blocking. Indeed, if $e=(u,w)$ $\gamma$-min blocks $M$, then $x_e=0$, $u$ has no edge or an edge $f$ with $e\triangleleft_u f$, and, similarly, $w$ either has no edge in $M$ or has an edge $g$ with $e\triangleleft_w g$. Either way, the constraint is violated as each term \new{on the left side of the inequality} is 0. \new{Conversely}, if the constraint is violated for some $e=(u,w)$, then all terms \new{on the left side of the inequality} are $0$, so $e\notin M$, $e\triangleleft_uM(u)$, and $e\triangleleft_w M(w)$, so $e$ must $\gamma$-min block $M$. Hence, we proved the following.
\begin{lemma}
    A matching $M$ is $\gamma$-min stable if and only if its characteristic vector is a solution to the above ILP. Therefore, the optimal solutions correspond to the maximum-size $\gamma$-min stable matchings. \qed
\end{lemma}

Now consider the following algorithm for finding a maximum-size $\gamma$-popular matching. Let $I$ be an instance of \cpopOPT. First, we create an instance $I'$ by duplicating each edge with copies $a(e)$ and $x(e)$. For each vertex $u\in U$, we set $p_u(a(e))= p_u(e)+R$ and $p_u(x(e))=p_u(e)$, where $R=\max \{ p_v(f)\mid v\in U \cup W, f\in E(v)\}$ is large enough to ensure that any $a$ type copy is better than any $x$ type copy. \new{For each vertex $w\in W$, we set} $p_w(a(e))= p_w(e)$ and $p_w(x(e))=p_w(e)+R$. Furthermore, we let $\gamma_{a(e)}^v=\gamma_{x(e)}^v=\gamma_e^v$ for $(e,v)\in E\times (U\cup W)$ with $v\in e$.
%In contrast to our previous algorithms, now we do not break ties, and let the ranking within the $a$ copies and within the $b$ copies both remain the same as the original ranking. For example, if we had $e\sim_uf\succ_ug$, then we get $a(e)\sim_ua(f)\succ_ua(g)\succ_ub(e)\sim_ub(f)\succ_ub(g)$. 
Finally, we find a maximum-size $\gamma$-min stable matching $M'$ in $I'$ \new{(e.g., using the ILP we described above)} and output its projection $M$ to $I$.

The fact that the output of the algorithm is a maximum-size \cpop\ matching follows from the theorem below.

\begin{restatable}{theorem}{exactopt}
\label{thm:exactopt}
    The projection $M$ of a $\gamma$-min stable matching $M'$ in $I'$ is $\gamma$-popular. Furthermore, there exists a maximum-size \cpop\ matching $M$ that is a projection of a $\gamma$-min stable matching $M'$ in $I'$.
\end{restatable}
\begin{proof}
Let $M'$ be a $\gamma$-min stable matching and $M$ be its projection.
 
 \begin{claim}
\label{claim:key-new}
    Let $e=(u,w)\in E$ be an edge. Then, the following hold:
    \begin{enumerate}
      
        \item At least one of $\{ u,w\}$ are matched.
        \item If $u$ or $w$ is unmatched in $M'$, then $\votec_u (M,N)+\votec_w(M,N)=0$. Furthermore, in the first case, $M'(w)$ is of type $x$ and in the second case, $M'(u)$ is of type $a$.
        \item If $M'(u)$  and $M'(w)$ are of type $a$, then $\votec_u (M,N)+\votec_w(M,N)\ge 0$. 
        \item If $M'(u)$ is of type $a$ and $M'(w)$ is of type $x$ too, then $\votec_u (M,N)+\votec_w(M,N)\ge -2$. 
        \item If $M'(u)$ is of type $x$ and $M'(w)$ is of type $a$, then $\votec_u (M,N)+\votec_w(M,N)=+2$. 
        \item If $M'(u)$ and $M'(w)$ are of type $x$, then $\votec_u (M,N)+\votec_w(M,N)\ge 0$. 
        
    \end{enumerate}
\end{claim}
\begin{claimproof}

    1. If $u,w$ are both unmatched, then this means that $M'$ is not maximal, contradicting that it is $\gamma$-min stable. 

    2. Suppose that $u$ is unmatched in $M'$. Then, as the edge $x(e)$ does not block, we get that $M'(w)$  of type $x$. Let $M(w)=g$. Suppose that $\votec_w(M,N)\ne +1$. Then, $p_w(e)\ge p_w(g) + \gamma_e^w$, and hence $p_w(x(e))=p_w(e)+R\ge p_w(M'(w))+\gamma_e^w$, so $x(e)$ $\gamma$-min blocks $M'$, contradiction. Hence,  $\votec_w(M,N)= +1$, so $\votec_u (M,N)+\votec_w(M,N)=0$. The case when $w$ is unmatched is similar. 

    From now on, let $f=M(u)$ and $g=M(w)$.

    3. As $\votec_u (M,N)\ne 0$ and $\votec_w (M,N)\ne 0$, it is enough to show that $\votec_u (M,N)=\votec_w (M,N)=-1$ is impossible. Suppose, for the contrary, that this holds. Then, we know that $p_u(e)\ge p_u(f)+\gamma_e^u$ and $p_w(e)\ge p_w(g) + \gamma_e^w$, therefore  $p_u(a(e))\ge p_u(a(f))+\gamma_e^u$ and $p_w(a(e))\ge p_w(a(g)) + \gamma_e^w$ and $a(e)$ $\gamma$-min blocks $M'$, contradiction. 

    4. This is trivial, as both votes are at least $ -1$.

    5. Suppose that $\votec_u (M,N) =-1$. Then, $p_u(e)\ge p_u(f) + \gamma_e^u$, so $p_u(x(e))=p_u(e)\ge p_u(x(f))+\gamma_e^u$, therefore $x(e)$ $\gamma$-min blocks $M'$, contradiction. If $\votec_w (M,N) =-1$, then $p_w(a(e))\ge p_w(a(g)) + \gamma_e^w$ and so $a(e)$ $\gamma$-min blocks $M'$, a contradiction.

    6. The proof of the last statement is analogous to the proof of point 3.    
\end{claimproof}

We will now establish $\gamma$-popularity. Suppose that there exists a matching $N$ such that $\Dvotec (M,N)<0$. Then, there must exist a component $P$ of the symmetric difference $M\Delta M$, such that $\Dvotec (M\cap P, N\cap P)<0$. 

For each edge $f\in M\cap P$, let $t(f)=0$, if $M'(f)=a(f)$ and $t(f)=1$ if $M'(f)=x(f) $. Then, for any edge $e=(u,w)\in N$, if it is between $M$-edges $f=M(u)$ and $g=M(w)$ with $t(f)=t(g)$, by Claim \ref{claim:key-new} we obtain that $\votec_u (M,N) +\votec_w (M,N) \ge 0$. If $t(f)=1-t(g)=0$, then $\votec_u (M,N) +\votec_w (M,N) \ge -2$ and if $t(f)=1-t(g)=1$, then $\votec_u (M,N) +\votec_w (M,N) = +2$. Therefore, from $\Dvotec (M\cap P , N \cap P)<0$ and Claim \ref{claim:key-new} point 2, we get that there must be strictly more edges $e=(u,w)\in N\cap P$ with $t(M(u))=1-t(M(w))=0$ than with $t(M(u))=1-t(M(w))=1$. In particular, $P$ must be a path.

From the two possible orderings of the vertices (and hence also the edges) of $P$, consider the one where the first vertex that is matched in $M$ is from $U$, and hence the last vertex matched by $M$ is from $W$. From now on, by the $k$-th edge of $M\cap P$, we mean the $k$-th $M$-edge in this ordering. Now, by the above, the first edge of $M$ has a type $a$ copy in $M'$, and the last edge has a type $x$. However, this implies that both the first and last edges of $P$ must be $M$-edges.

Indeed, if the first edge were an $N$-edge, then the first vertex is an agent $w_1$, and by Claim \ref{claim:key-new} point 2, we obtain that the first $M$-edge must be of type $a$, which is a contradiction.

Similarly, if the last edge is an $N$-edge, then the last $M$-edge has to be of type $x$, which is a contradiction. Hence, there are two vertices in $P$ that are only covered in $M$ and zero vertices that are only covered in $N$, so even though $\sum_{v\in V(N\cap P)}\votec_v(M,N)\ge -2$, by considering the votes of the two vertices in $V(M\cap P)\setminus V(N\cap P)$, we obtain that $\Dvotec (M\cap P,N\cap P)\ge 0$, which is a contradiction. Therefore, we conclude that $M$ is \cpop.

 %We create instances $I_{\mathrm{strict}}$ and $I'_{\mathrm{strict}}$ with strict preferences as follows. For each $e=(u,w)$ in $M$, we break the ties in $I_{\mathrm{strict}}$ for $u$ and $w$ such that the edge $e$ is preferred to any edge it was tied with in $I$. Then, for $I'_{\mathrm{strict}}$ we use this tie-breaking for both within the $a$ and within the $b$ copies. 
 
Finally, we show that there exists a maximum-size \cpop\ matching that is a projection of a $\gamma$-min stable matching. Let $M$ be a maximum-size \cpop\ matching of $I$ and create an instance $I_{\mathrm{strict}}$ with strict preferences and the same set of edges as $I$ as follows. For each $v\in U\cup W$, we create an arbitrary preference list $\succ_v$ that satisfies that, for $e=M(v)$, if $p_v(f)<p_v(e)+\gamma_f^v$ then $e\succ_v f$, and if $p_v(f)\ge p_v(e)+\gamma_e^v$ then $f\succ_v e$. Since, for any edge $f$, exactly one of these holds, this is possible. 

We first claim that $M$ is popular in $I_{\mathrm{strict}}$. This easily follows from the fact that $\votec_v(M,f)=\vote_v^{I_{\mathrm{strict}}}(M,f)$.

As shown in \cite{cseh2018dominant}, Section 3, there exists a \emph{dominant} matching $M_D$ (a matching $M_D$ is dominant, if it is popular, and for any matching $N$ with $|N|>|M_D|$, $\Dvote (M,N)>0$) in $I_{\mathrm{strict}}$. By definition, dominant matchings are always maximum-size popular matchings, so $|M_D|\ge |M|$. From  $\votec_v(M,f)=\vote_v^{I_{\mathrm{strict}}}(M,f)$, we also know that $M_D$ is a \cpop\ matching in $I$.

Again, as shown in \cite{cseh2018dominant} Section 3, $M_D$ is a projection of a stable matching $M_D'$ of an instance $I'_{\mathrm{strict}}$, which is created from $I_{\mathrm{strict}}$ by (i) duplicating every edge with copies $a$ and $b$, (ii) ranking the edges in the same order as in $I_{\mathrm{strict}}$ within each type, and finally (iii) for vertices in $U$ any type $a$ edge is preferred to any type $x$ edge and for vertices in $W$, any type $b$ edge is preferred to any type $a$ edge. This instance is very similar to the instance $I'$, where only the preferences within type $a$ and type $x$ copies differ, but the sets of vertices and edges are the same. Hence, $M_D'$ gives a matching in $I'$ too.

We claim that $M_D'$ is $\gamma$-min stable in $I'$. Suppose, \new{for the sake of contradiction,} that an edge $a(e)=a(u,w)$ $\gamma$-min blocks $M_D'$ in $I'$. This implies that $w$ must be matched with a type $a$ copy $a(g)$ (or $w$ remains unmatched), and $p_w(a(e))\ge p_w(a(g))+\gamma_{a(e)}^w$. Therefore, if $w$ is matched, then $p_w(e)\ge p_w(g)+\gamma_e^w$, implying that $e\succ_w g $ in $I_{\mathrm{strict}}$ and hence $a(e)\succ_w a(g)$ in $I'_{\mathrm{strict}}$. Furthermore, if $u$ is matched with an $a$ type copy $a(f)$, then $p_u(a(e))\ge p_u(a(f))+\gamma_{a(e)}^u$. Therefore, $p_u(e)\ge p_w(g)+\gamma_e^w$, implying that $e\succ_u f $ in $I_{\mathrm{strict}}$, and hence $a(e)\succ_u a(f)$ in $I'_{\mathrm{strict}}$. If $u$ is not matched with an $a$ type copy, then $a(e)$ is clearly preferred to $M_D'(u)$ in $I'_{\mathrm{strict}}$. Hence, we can observe that, in any case, we get that $a(e)$ blocks $M_D'$ in $I'_{\mathrm{strict}}$, which is a contradiction. A similar argument leads to a contradiction for $x$-type edges too.

Therefore, we conclude that $M_D$ is a maximum-size \cpop\ matching in $I$ that is a projection of a $\gamma$-min stable matching $M_D'$, as required.
\end{proof}

\section{Super popular matchings}
\label{sec:superpop}

Unlike the computation of weakly popular and $\gamma$-popular matchings (of arbitrary sizes), we note that the computation of super popular matchings is computationally intractable. In fact, even deciding whether a \spop\ matching exists is NP-hard (as super popular matchings might not exist). \nd{This also contrasts the complexity of the problem of deciding the existence of super stable matchings, which can be done in polynomial time \cite{irving1994stable}. As we showed in Theorem \ref{thm:stableispop} that super stable matchings are super popular, the intractable case is really whenever instances do not admit super stable matching.} 

\begin{restatable}{theorem}{superpop}
\label{thm:superpop}
    Deciding if an instance $I$ of \spopex\ admits a super popular matching is NP-hard even with ordinary weak preferences, where only two agents have ties, and both have just a single tie of length two. 
\end{restatable}
\begin{proof}
    We reduce from the NP-hard problem \textsc{(0,1,1,0)-pm}, which is the problem of deciding whether an instance $I$ of the popular matching problem with strict preferences contains a popular matching $M$, that does not contain a given forbidden edge $(x,y)$ and covers a given forced vertex $t\ne x,y$. By the reduction of \cite{faenza2018two}, we may assume that $x$ is a leaf, and $y$ has only one other neighbor $z$ (other than $x$), such that $y$ and $z$ are each others' first ranked agents. 
    
Let $I$ be an instance of \textsc{(0,1,1,0)-pm}.
    We create an instance $I'$ of \spopex\ as follows. We keep each vertex and edge from $I$. Furthermore, we add vertices $d_x,x',d_t,t'$ along with edges $(x,d_x),(x',d_x),(t,d_t),(t',d_t)$. Then, the graph is still clearly bipartite. 

    We let the preference of $d_x$ be $x\succ x'$ and the preference of $d_t$ be $t\sim t'$, where $\sim$ denotes a tie. Furthermore, we extend $t$'s preference list by adding $d_t$ to the end and also $x$'s preference list from $y$ to $d_x\sim y$.

    \begin{claim}
        If there is a \spop\ matching in $I'$, then there is a popular matching in $I$, which excludes $(x,y)$ and covers $t$.
    \end{claim}
    \begin{claimproof}
        Let $M'$ be a \spop\ matching in $I'$. Let $M$ be the restriction of $M'$ to $I$.
        
        We first show that $t$ is covered in $M$. Suppose otherwise. Then, if $(t,d_t)\in M'$, we get that $M'\setminus \{ (t,d_t)\} \cup \{ (t',d_t)\}$ dominates $M'$, as $d_t$ and $t'$ vote with $-1$ and only $t$ votes with $+1$. Similarly, if $(t,d_t)\notin M'$, then $M'\setminus \{ (t',d_t)\} \cup \{ (t,d_t)\}$ dominates $M'$, contradiction.

        Next, we show that $(x,y)\notin M$. Suppose that $(x,y)\in M'$. Then $(x',d_x)\in M'$. Also, $M(z)\ne \emptyset$, because otherwise $M'\setminus \{ (x,y)\} \cup \{ (y,z)\}$ dominates $M'$. However,  $M'\setminus \{ (x',d_x),(x,y),M(z)\} \cup \{ (x,d_x),(y,z)\}$ dominates $M'$, because $x,y,z$ and $d_x$ all vote with $-1$ and only $x'$ and $z$'s partner in $M'$ vote with $+1$.

        These observations also imply that $(t',d_t)\in M'$ and $(x,d_x)\in M'$. 

        Finally, we show that $M$ is popular. Suppose that a matching $N$ dominates $M$. Create $N'$ from $N$ by adding $(t',d_t)$ and $(x,d_x)$, if $N(x)=\emptyset$ and $(x',d_x)$ otherwise. In both cases, the sum of the votes of $t',d_t,x',d_x$ is $0$. Also, no original agent may have a smaller vote, because apart from $x$, each of them has the same situation in $M'$ and $M$; and also $N'$ and $N$. For agent $x$, if $(x,d_x)\in N'$, then they vote with 0 in both $M$ versus $N$ in $I$ and  $M'$ versus $N'$ in $I'$. Otherwise, their vote is $-1$ in the former and also $-1$ in the latter case. Hence, we get that the sum of votes in the comparison of $M'$ and $N'$ is negative, contradicting $M'$ is \spop.  
    \end{claimproof}

    \begin{claim}
        If there is a popular matching $M$ in $I$ that excludes $(x,y)$ and covers $t$, then there is a \spop\ matching in $I'$.
    \end{claim}
    \begin{claimproof}
        Let $M$ be such a popular matching in $I$. Then, we can extend $M$ to $M'$ in $I'$ by adding the edges $(x,d_x)$ and $(t',d_t)$. Suppose that some matching $N'$ dominates $M'$. 

        Let $N$ be the restriction of $N'$ to $I$. Each original agent other than $x$ and $t$ must have the same partners in both $M$ and $M'$ and also in $N$ and $N'$, hence their vote is the same. 

        Consider agent $t$. They have the same partner in $M$ and $M'$. Suppose that they have a different partner in $N$ than in $N'$. Then, $(t,d_t)\in N'$ and $t$ is unmatched in $N$. As $d_t$ is the strictly worst neighbor of $t$ in $I'$, we get that $t$ votes with $+1$ in both instances. 

        Consider agent $x$. Suppose that $x$ is matched in $N$, so $(x,y)\in N$ and $(x,y)\in N'$. As $x$ was unmatched in $M$ and $(x,d_x)\in M'$, we get that $x$ votes with $-1$ in both instances. If $x$ is not matched in $N$, but is matched in $N'$, then $(x,d_x)\in N'$ and $x$ votes with $0$ in both cases. Finally, if $x$ is unmatched in both $N$ and $N'$, then in $I$ they vote with $0$ in $M$ versus $N$, but they vote with $+1$ in $M'$ versus $N'$, which is larger. 

        It is easy to check that the sum of the votes of $x',d_x,t',d_t$ is always 0. Hence, we obtain that $N$ must dominate $M$, which is a contradiction. 
    \end{claimproof}
\end{proof}

\section{Experimental results for weak popularity}
\label{sec:exp}

Our main motivation for studying new notions for popular matchings was two-fold: on the one hand, popular matchings have desirable properties and are well-studied in the area of (computational) social choice, but have the drawback that -- contrary to weakly stable (and our new notion of weakly and $\gamma$-popular) matchings -- they might not exist when preferences contain ties. On the other hand, while weakly stable matchings always exist, they might leave a significant number of agents unmatched, and even then, finding one that maximises the number of matches is NP-hard \cite{IMMMmaxsmti}. We already highlighted that weakly popular matchings might match strictly more agents than weakly stable matchings, and it follows directly from Theorem \ref{thm:stableispop} that a maximum-size weakly popular matching is always at least as large as a maximum-size weakly stable matching. Now, we will complement these theoretical results with extensive experimental results to show that weakly popular matchings do indeed bring desirable size benefits, and to prove experimentally that, in practice, our approximation algorithm in Section \ref{sec:approxalgo} performs very close to optimally. \nd{We note that we did not perform experiments on super popular matchings due to the absence of efficient algorithms to find one, even if one exists.}

\paragraph{Setup and code}
All implementations were written in Python.
We used four compute nodes at a time, each equipped with dual AMD EPYC 7643 CPUs and 2TB RAM. 
All code is publicly available, see reference  \cite{experimentsCode}. 
The experiments involving randomly generated instances are seeded and can thus be easily replicated. All ILP models are solved using the PuLP package with the GurobiPy solver. We set a timeout of three hours for each algorithm execution per instance, and fewer than 1\% of executions of the ILP-based algorithms did not compute an optimal solution within this time.  We excluded these executions from our results, and we are convinced that the only impact this omission has on our analysis and interpretation below is that one should expect the true average runtimes of the ILP-based algorithms to be slower when computing to optimality.

\paragraph{Approximate and benchmark algorithms}
In our experiments, we compare five different algorithms, specifically, the
\begin{itemize}
    \item $\frac{2}{3}$-approx. algorithm {\sf MWS-A} for finding a max-size weakly stable matching by \citet{yokoi2021approximation};
    \item $\frac{3}{4}$-approx. algorithm {\sf MWP-A} for finding a max-size weakly popular matching from Section \ref{sec:approxalgo};
    \item ILP-based exact algorithm {\sf MWS-E} for finding a max-size weakly stable matching;
    \item ILP-based exact algorithm {\sf MWP-E} for finding a max-size weakly popular matching from Section \ref{sec:exactalgo};
    \item max-size matching algorithm {\sf MC} provided by the NetworkX Python package.
\end{itemize}
Stability verification is done in a straightforward linear-time way by iterating through the preference lists of one side of the bipartition once and checking for blocking edges. Weak popularity verification is done in polynomial time too by breaking ties such that matches appear last and then using the well-known characterisation due to \citet{huang2013popular}, similar to the procedure we described in Section \ref{sec:verify} for verifying the more general $\gamma$-popularity.

\paragraph{Experimental and statistical variables}
There are many possible variables to control and measure in the generation and consideration of instances. We characterise our instance spaces through the following independent and controlled variables:
\begin{itemize}
    \item $n_1,n_2$: numbers of agents on each side of the bipartition;
    \item $t_1,t_2$: tie density in the preference list on each side of the bipartition;\footnote{A tie density of 30\%, for example, implies that, given two adjacent preference list entries, there is a 30\% chance that the entries are tied. Multiple tied adjacent preference list entries form one larger tie. A tie density of 0 indicates strict preferences, and a tie density of 1 indicates that every preference list consists of one tie.}
    \item $l_1,l_2$: preference list lengths on each side of the bipartition.
\end{itemize}
For each instance and algorithm, we measure the following dependent variables:
\begin{itemize}
    \item size: cardinality of the computed matching;
    \item stable: boolean indicator of whether the computed matching is (weakly) stable or not;
    \item $ba$: total number of blocking agents (0 if stable);
    \item $bp$: total number of blocking edges (0 if stable);
    \item time: seconds required for the algorithm to terminate with an (optimal) solution;
\end{itemize}
Note that we do not include the time required to generate the instance in the time measurement, but we do include the time required for the creation of auxiliary instances, edge copies, matching projections, and ILP models, as these are specific to the algorithms.

\subsection{Experiments on random instances}

For our random experiments, we generated instances with preference lists sampled uniformly at random, a common practice in algorithmic experiments in this area (e.g., see \cite{mertens15,delorme2019mathematical,glitzner2025perspectives}). For symmetry and increased interpretability, we require that $n_1=n_2=n_*$ (for some $n_*\in\mathbb N$) and that $t_1=t_2=t_*$ (for some $t_*\in[0,1]$). Furthermore, we fix $l_1=l_*\cdot n_1$ (for some $l_*\in(0,1]$), which results in varying $l_2$ list lengths depending on the preferences being generated. For all averaged results, we generated 3,000 instances per setting to average over, resulting in more than 1M algorithm executions altogether.

We know by definition and by Theorem \ref{thm:stableispop} that the results must, in all settings, maintain the following hierarchy: {\sf MC} produces the largest matching, followed by {\sf MWP-E}, {\sf MWS-E}, and then {\sf MWS-A}. We also know that {\sf MWP-E} must (weakly) exceed {\sf MWP-A}, but the relative placement of our new approximation algorithm {\sf MWP-A} above {\sf MWS-E} is not necessarily guaranteed (see Theorem \ref{thm:approxguarantees} for guarantees). It is our goal to convince the reader that, experimentally, this is indeed the case and that our approximation algorithm achieves much tighter approximation bounds in practice than those we proved theoretically.

We start by studying the effects of $t_*$ and $l_*$ on the average size of matchings returned by the four algorithms, compared against {\sf MC} as a baseline. First, we fix $n_*=100$ and $l_*\in\{0.03, 0.3\}$, and study the impact of $t_*$ between 0\% and 100\% in steps of 10\% (with additional data points in steps of 5 for $l_*=0.3$ and $t_*\leq 0.6$). Figure \ref{fig:sizeversusties} summarises our results, showing the average size of matchings as a percentage of the average size of the maximum-cardinality matchings. Notice that the $y$-axis scaling differs between these two plots. This is because we can see much stronger absolute effects for $l_*=0.03$ than for $l_*=0.3$ (but the relative effect remains the same). We can observe that {\sf MWP-A} places above {\sf MWP-E} for all values of $t_*$, and that the average fraction of the {\sf MC} matching size is weakly increasing proportionally to $t_*$ for all other algorithms. For $l_*=0.03$, while the exact and approximate weakly stable matchings may leave more than 20\% of the number of agents matched in {\sf MC} unmatched,\footnote{The plot clearly shows a more than ten percentage point difference in the number of matches, and each match contains two agents.} the weakly popular algorithms achieve at least 98.12\% of the size of an {\sf MC} matching on average, for all values of $t_*$. For $l_*=0.3$, we see a similar effect, although both weakly stable algorithms achieve at least 99.85\% of the best-possible size, whereas both weakly popular matchings achieve 100\% for all $t_*$. 

\begin{figure}[!htb]
    \centering

    % Subplot 1
    \begin{subfigure}{0.44\textwidth}
        \begin{tikzpicture}
            \begin{axis}[
                width=\textwidth,
                height=5cm,
                ymin=88,
                ymax=100.5,
                xmin=-1,
                xmax=101,
                xlabel={$t_*$ in \%},
                grid=both,
                grid style={dashed, gray!30},
                cycle list name=color list,
                every axis plot/.append style={thick},
                ylabel={size ratio (\%)},
                title={Relative average sizes when $l_*=3\%$},
                title style={
                    yshift=-1.5ex
                }
            ]\addplot[acmDarkBlue, mark=*] table [x=t, y=MCr] {data/sizeties.txt};
            \addplot[acmGreen, mark=square*] table [x=t, y=MWPEr] {data/sizeties.txt};
            \addplot[acmPink, mark=triangle*] table [x=t, y=MWPAr] {data/sizeties.txt};
            \addplot[acmOrange, mark=diamond*] table [x=t, y=MWSEr] {data/sizeties.txt};
            \addplot[acmLightBlue, mark=pentagon*] table [x=t, y=MWSAr] {data/sizeties.txt};
            \end{axis}
        \end{tikzpicture}
    \end{subfigure}
    \hfill
    % Subplot 2
    \begin{subfigure}{0.55\textwidth}
        \begin{tikzpicture}
            \begin{axis}[
                width=.8\textwidth,
                height=5cm,
                ymin=99.83,
                ymax=100.01,
                xmin=-1,
                xmax=101,
                xlabel={$t_*$ in \%},
                grid=both,
                grid style={dashed, gray!30},
                cycle list name=color list,
                every axis plot/.append style={thick},
                title={Relative average sizes when $l_*=30\%$},
                legend style={
                    at={(1.05,0.5)},
                    anchor=west,
                },
                title style={
                    yshift=-1.5ex
                },
            ]\addplot[acmDarkBlue, mark=*] table [x=t, y=MCr] {data/sizeties2.txt};
            \addplot[acmGreen, mark=square*] table [x=t, y=MWPEr] {data/sizeties2.txt};
            \addplot[acmPink, mark=triangle*] table [x=t, y=MWPAr] {data/sizeties2.txt};
            \addplot[acmOrange, mark=diamond*] table [x=t, y=MWSEr] {data/sizeties2.txt};
            \addplot[acmLightBlue, mark=pentagon*] table [x=t, y=MWSAr] {data/sizeties2.txt};
            \addlegendentry{{\sf MC}}
            \addlegendentry{{\sf MWP-E}}
            \addlegendentry{{\sf MWP-A}}
            \addlegendentry{{\sf MWS-E}}
            \addlegendentry{{\sf MWS-A}}
            \end{axis}
        \end{tikzpicture}
    \end{subfigure}
    \caption{Effect of ties ($t_*$) on the matching size ratios for $n_*=100$}
    \label{fig:sizeversusties}
    \Description{The size ratio is increasing for all algorithms, but the popular matching algorithms achieve better ratios.}
\end{figure}

Next, we fix $n_*=100$ and $t_*\in\{0.03, 0.3\}$, and we study the impact of $l_*$ between 1\% and 100\% in steps of 1\% between 1\% and 10\%, and in steps of 10\% from thereon. Figure \ref{fig:sizeversuslists} summarises our results, showing only those for $l_*\leq40\%$, as all algorithms achieve an average of 100\% for $l_*\geq40\%$. Again, as expected from our analysis above, we find stronger effects for smaller values of $t_*$ and $l_*$. Specifically, we can see that {\sf MWP-E} and {\sf MWP-A} achieve much larger sizes on average than {\sf MWS-E} and {\sf MWS-A} whenever the latter match only a significantly lower number of agents. Also, the average size of matchings computed by {\sf MWP-A} always (weakly) exceeds that of {\sf MWS-E}. For both values for $t_*$ and for all algorithms, there seems to be a global minimum around $l_*=0.03$, i.e., when one side ranks exactly three agents of the other side, and our results suggest that this holds regardless of the probability of ties.

\begin{figure}[!htb]
    \centering

    % Subplot 1
    \begin{subfigure}{0.44\textwidth}
        \begin{tikzpicture}
            \begin{axis}[
                width=\textwidth,
                height=5cm,
                ymin=89,
                ymax=100.3,
                xmin=0,
                xmax=40,
                xlabel={$l_*$ in \%},
                grid=both,
                grid style={dashed, gray!30},
                cycle list name=color list,
                every axis plot/.append style={thick},
                ylabel={size ratio (\%)},
                title={Relative average sizes when $t_*=3\%$},
                title style={
                    yshift=-1.5ex
                },
            ]\addplot[acmDarkBlue, mark=*] table [x=l, y=MCr] {data/sizelists.txt};
            \addplot[acmGreen, mark=square*] table [x=l, y=MWPEr] {data/sizelists.txt};
            \addplot[acmPink, mark=triangle*] table [x=l, y=MWPAr] {data/sizelists.txt};
            \addplot[acmOrange, mark=diamond*] table [x=l, y=MWSEr] {data/sizelists.txt};
            \addplot[acmLightBlue, mark=pentagon*] table [x=l, y=MWSAr] {data/sizelists.txt};
            \end{axis}
        \end{tikzpicture}
    \end{subfigure}
    \hfill
    % Subplot 2
    \begin{subfigure}{0.55\textwidth}
        \begin{tikzpicture}
            \begin{axis}[
                width=.8\textwidth,
                height=5cm,
                ymin=92.5,
                ymax=100.2,
                xmin=0,
                xmax=40,
                xlabel={$l_*$ in \%},
                grid=both,
                grid style={dashed, gray!30},
                cycle list name=color list,
                every axis plot/.append style={thick},
                title={Relative average sizes when $t_*=30\%$},
                legend style={
                    at={(1.05,0.5)},
                    anchor=west,
                },
                title style={
                    yshift=-1.5ex
                },
            ]\addplot[acmDarkBlue, mark=*] table [x=l, y=MCr] {data/sizelists2.txt};
            \addplot[acmGreen, mark=square*] table [x=l, y=MWPEr] {data/sizelists2.txt};
            \addplot[acmPink, mark=triangle*] table [x=l, y=MWPAr] {data/sizelists2.txt};
            \addplot[acmOrange, mark=diamond*] table [x=l, y=MWSEr] {data/sizelists2.txt};
            \addplot[acmLightBlue, mark=pentagon*] table [x=l, y=MWSAr] {data/sizelists2.txt};
            \addlegendentry{{\sf MC}}
            \addlegendentry{{\sf MWP-E}}
            \addlegendentry{{\sf MWP-A}}
            \addlegendentry{{\sf MWS-E}}
            \addlegendentry{{\sf MWS-A}}
            \end{axis}
        \end{tikzpicture}
    \end{subfigure}

    \caption{Effect of list lengths ($l_*$) on the matching size ratios for $n_*=100$}
    \label{fig:sizeversuslists}
    \Description{The size ratio is increasing for all algorithms once $l_*$ crosses $3\%$, but the popular matching algorithms achieve better ratios.}
\end{figure}

Lastly, we fix the parameters $t_*=l_*=0.1$ and study the effect of $n_*$, chosen to be between 10 and 100 in steps of 1. Figure \ref{fig:sizeversusagents} summarises our results. We, again, find that the two algorithms returning weakly stable matchings can leave a significant fraction of agents unmatched that could otherwise be matched when relaxing weak stability to weak popularity. We can also see that {\sf MWP-A} has a performance that is extremely close to {\sf MWP-E}, despite our weaker theoretical guarantees, and, notably, {\sf MWP-A} always performs better than {\sf MWS-E}. The approximation gap between {\sf MWP-A} and {\sf MWP-E} also seems to be significantly tighter than that between {\sf MWS-A} and {\sf MWS-E}. Interestingly, compared to {\sf MC}, all other algorithms seem to exhibit minima in their achieved size ratios for $30\leq n_*\leq39$, which, given that $l_*=0.1$, again suggests a global minimum in performance whenever agents of one set rank exactly three agents of the other set. Finally, we emphasise that, in this setting, on average, {\sf MWP-A} matches fewer than one agent fewer than {\sf MC}, even at the observed minima.\footnote{At $n_*=30$, {\sf MC} matches at most 60 agents and {\sf MWP-A} achieves around 98.616\% the size of the {\sf MC} matching, which results in fewer than one unmatched agent in total. The same holds for all other values of $n_*$ studied here.} Finally, we remark that the performance appears approximately stepwise-constant in intervals of ten, which is likely because preference list lengths increase with every increase of ten for $n_*$ (due to $l_*=0.1$). This shows that adding (the same number of) agents to both sides of the market, while remaining the same constant preference list length (rather than one depending on the number of agents) on one side of the market, and the same probability of ties, does not affect the performance of the algorithms.

\begin{figure}[!htb]
    \centering
    \begin{tikzpicture}
        \begin{axis}[
            width=.99\textwidth,
            height=5cm,
            ylabel={size ratio (\%)},
            ymin=87,
            ymax=100.5,
            xmin=9,
            xlabel={$n_*$ agents per side},
            xmax=101,
            grid=both,
            grid style={dashed, gray!30},
            cycle list name=color list,
            every axis plot/.append style={thick},
            title={Relative average sizes when $l_*=t_*=0.1$},
            title style={
                yshift=-1.5ex
            },
            legend cell align={left},
            legend style={
                at={(1,0)},
                anchor=south east,
                column sep=1ex,
                font=\footnotesize
            },
        ]
        \addplot[acmDarkBlue, mark=*] table [x=n, y=MCr] {data/sizeagents.txt};
        \addplot[acmGreen, mark=square*] table [x=n, y=MWPEr] {data/sizeagents.txt};
        \addplot[acmPink, mark=triangle*] table [x=n, y=MWPAr] {data/sizeagents.txt};
        \addplot[acmOrange, mark=diamond*] table [x=n, y=MWSEr] {data/sizeagents.txt};
        \addplot[acmLightBlue, mark=pentagon*] table [x=n, y=MWSAr] {data/sizeagents.txt};
        \addlegendentry{{\sf MC}}
        \addlegendentry{{\sf MWP-E}}
        \addlegendentry{{\sf MWP-A}}
        \addlegendentry{{\sf MWS-E}}
        \addlegendentry{{\sf MWS-A}}
        \end{axis}
    \end{tikzpicture}
    \caption{Effect of agent numbers ($n_*$) on the matching size ratios for $l_*=t_*=0.1$}
    \label{fig:sizeversusagents}
    \Description{Popular matchings achieve a very high size ratio for all numbers of agents, while stable matchings can leave a significant number of agents unmatched.}
\end{figure}

Of course, while we established in Theorem \ref{thm:stableispop} that weakly stable matchings must be weakly popular, weakly popular matchings might not be weakly stable. Given the significant differences between the average maximum sizes of the two types of matchings that we highlighted in our experiments above, it can be expected that the matchings we compute using {\sf MWP-E} and {\sf MWP-A} can differ significantly from stable matchings. Hence, there is a trade-off between size and stability. An easy and natural way to quantify the instability of these weakly popular matchings is to consider the average numbers of blocking agents and blocking edges admitted by them (see \citet{eriksson2008instability}). Our results for this across different settings ($l_*,t_*\in\{0.2,0.8\}$ and $n_*\in\{50,100,150\}$) are summarised in Table \ref{table:blocking}. We can observe that, while maximum-size weakly popular matchings can deviate significantly from stable matchings, our approximation algorithm computes matchings that are, on average, very close to stable. Especially, it is clear that when the gap between maximum-size weakly stable and maximum-size weakly popular matchings is small (i.e., when $l_*,t_*$ are large, as established in our experiments above), {\sf MWP-A} essentially only computes stable matchings. Recall that $n_*$ is the number of agents \emph{per side}, so, in the worst case (in absolute numbers), when $n_*=50$ and $l_*=t_*=0.2$, {\sf MWP-A} computes a matching with, on average, 7.45 (out of $2n_*=100$ total agents) blocking agents and 5.67 blocking edges (out of $l_*\cdot n_*^2=0.2\cdot 50^2=250$ total edges). This corresponds to $7.45\%$ blocking agents and less than $2.27\%$ blocking edges. These observations render our approximation algorithm even more desirable than the exact ILP-based algorithm whenever matchings close to stable are desired, as we have already shown experimentally that the average gap in size between the matchings produced by the algorithms is minimal.

\begin{table}[!htb]
    \small
    \centering
    \renewcommand{\arraystretch}{.9}
    \caption{Average number of blocking agents/edges across different settings}
    \begin{tabular}{c c c c c c c c}
        \toprule
        & & \multicolumn{3}{c}{$t_*=20\%$} & \multicolumn{3}{c}{$t_*=80\%$} \\
        \cmidrule(lr){3-5} \cmidrule(lr){6-8}
        
        &  & $\mathbf{n_*}=$ \textbf{50} & \textbf{100} & \textbf{150} & \textbf{50} & \textbf{100} & \textbf{150} \\
        \midrule
        \multirow{2}{*}{$l_*=20\%$} 
        & {\sf MWP-A} & 7.45/5.67 & 3.51/2.81 & 1.50/1.23 & 0.02/0.01 & 0.00/0.00 & 0.00/0.00 \\
        & {\sf MWP-E} & 11.58/8.81 & 4.98/3.89 & 86.71/126.90 & 5.71/3.41 & 30.40/22.99 & 67.32/68.67 \\
        \midrule
        \multirow{2}{*}{$l_*=80\%$}
        & {\sf MWP-A} & 0.01/0.01 & 0.00/0.00 & 0.00/0.00 & 0.00/0.00 & 0.00/0.00 & 0.00/0.00 \\
        & {\sf MWP-E} & 20.77/32.99 & 24.58/55.25 & 19.67/58.39 & 25.56/35.59 & 67.99/134.73 & 102.96/268.50 \\
        \bottomrule
    \end{tabular}
    \label{table:blocking}
\end{table}

Finally, we study the average time required for the algorithms to terminate. Table \ref{table:times} summarises our results for the same settings as above ($l_*,t_*\in\{0.2,0.8\}$ and $n_*\in\{50,100,150\}$). Our main goal here is to convince the reader that the approximation algorithm {\sf MWP-A} can be implemented to run efficiently, after already having shown above that its performance is expected to be very close to that of the much slower optimal algorithm {\sf MWP-E} and exceeding that of the also much slower optimal algorithm {\sf MWS-E}. In the table, we can see that, as can be expected from its design, {\sf MWP-A} runs slower than {\sf MWS-A}, but significantly faster than both ILP-based algorithms. Furthermore, naturally, both approximation algorithms scale much better with an increase in agent numbers than the ILP-based algorithms.\footnote{Works such as those by \citet{delorme2019mathematical,delorme2021stability,pettersson2021improving} have shown that advanced preprocessing steps and non-trivial problem modelling or operations research techniques can reduce the time required to find a maximum-size weakly stable matching using ILPs. While we did not consider such optimisations, the ILP-based algorithms would, \nd{in the worst case,} still run in exponential time in the number of agents.  \nd{Thus, even if we were to incorporate such optimisations, the approximation algorithms would remain significantly more efficient, asymptotically speaking.}}

\begin{table}[!htb]
    \centering
    \caption{Average algorithm times across different settings in seconds}
    \begin{tabular}{c c c c c c c c}
        \toprule
        & & \multicolumn{3}{c}{$t_*=20\%$} & \multicolumn{3}{c}{$t_*=80\%$} \\
        \cmidrule(lr){3-5} \cmidrule(lr){6-8}
        
        &  & $\mathbf{n}=$\textbf{50} & \textbf{100} & \textbf{150} & \textbf{50} & \textbf{100} & \textbf{150} \\
        \midrule
        \multirow{4}{*}{$l_*=20\%$} 
        & {\sf MWS-A} & 0.00 & 0.03 & 0.07 & 0.00 & 0.03 & 0.06 \\
        & {\sf MWP-A} & 0.02 & 0.08 & 0.26 & 0.01 & 0.08 & 0.26 \\
        & {\sf MWS-E} & 0.09 & 0.36 & 1.46 & 0.11 & 0.67 & 2.81 \\
        & {\sf MWP-E} & 13.32 & 2.12 & 44.89 & 0.34 & 3.04 & 17.23 \\
        \midrule
        \multirow{4}{*}{$l_*=80\%$} 
        & {\sf MWS-A} & 0.04 & 0.21 & 0.73 & 0.04 & 0.21 & 0.72 \\
        & {\sf MWP-A} & 0.12 & 0.77 & 2.48 & 0.11 & 0.76 & 2.45 \\
        & {\sf MWS-E} & 0.60 & 9.69 & 58.19 & 0.88 & 12.97 & 172.29 \\
        & {\sf MWP-E} & 5.94 & 128.00 & 1259.23 & 4.01 & 53.42 & 924.08 \\
        \bottomrule
    \end{tabular}
    \label{table:times}
\end{table}

\subsection{Experiments on real-world instances}

To further verify the advantages of weak popularity and our approximation algorithm that we observed in the random experiments above, we also ran experiments on real-world instances. Specifically, we used datasets from the Scottish Foundation Allocation Scheme (SFAS), which used to be tasked with the assignment of medical school graduates in Scotland to so-called foundation training programs, which are required to practice medicine in the UK.\footnote{See \url{https://matching-in-practice.com/the-scottish-foundation-allocation-scheme-sfas/} for more details. The scheme has since been superseded by a UK-wide allocation.} Our datasets contain the full (anonymous) graduate preferences over foundation programs and the foundation program preferences over graduates, from the years 2005/06, 06/07, and 07/08. Graduate preferences are strict \new{(i.e., $t_1=0$)}, while the program preferences contain many large ties. \new{Specifically, for the years (in order),} we have $t_2=0.92,0.76,0.81$. 

As a first step, we converted the many-to-one matching instances into one-to-one matching instances by duplicating programs (this is a standard procedure and yields a one-to-one correspondence between the stable matchings, see for example \citet{manlove2013algorithmics}). We then applied the {\sf MC}, {\sf MWP-A}, and {\sf MWS-A} algorithms to the converted instances. We also took previously existing results for {\sf MWS-E} from \citet{kwanmanlove}, but did not run {\sf MWP-E} due to the sizes of the instances. Our results are shown in Table \ref{table:sfas}. We can see that our approximation algorithm {\sf MWP-A} returns a maximum-cardinality matching in two out of three instances and always strictly exceeds the matching size of the matching returned by {\sf MWS-E}. Notably, {\sf MWS-A} leaves up to 6\% of graduates unassigned. Of course, the trade-off is in the stability guarantees -- {\sf MWP-A} incurs a significant number of blocking agents and blocking edges, although much fewer than {\sf MC}. Note that blocking agents counts each hospital copy separately here, so the true number of blocking hospitals (rather than blocking hospital copies) is much lower. We highlight, furthermore, that for {\sf MWP-A}, the number of blocking residents (which are a subset of the blocking agents overall) is 232, 423, and 127 for years 05/06, 06/07, and 07/08, respectively. The times show that while {\sf MWP-A} is slower than {\sf MWS-A} (and much slower than {\sf MC}), all algorithms terminate in less than half a minute for each of the instances.

\begin{table*}[!htb]
    \centering
    \renewcommand{\arraystretch}{.9}
    \caption{SFAS results. Programs is the total number of program copies in the converted instances.}
    \begin{tabular}{c c c c c c}
        \toprule        
        &  & Matching Size & Blocking Agents & Blocking Edges & Time (s) \\
        \midrule
        Year 2005/06 & {\sf MC} & 759 & 1391 & 15410 & 0.72 \\
        (759 graduates & {\sf MWP-A} & 759 & 254 & 591 & 9.00 \\
        801 programs) & {\sf MWS-E} & 758 & 0 & 0 & N/A \\
         & {\sf MWS-A} & 748 & 0 & 0 & 1.92 \\
        \midrule
        Year 2006/07 &{\sf MC} & 781 & 1394 & 16177 & 0.72 \\
        (781 graduates & {\sf MWP-A} & 781 & 686 & 4454 & 26.98 \\
        789 programs) & {\sf MWS-E} & 746 & 0 & 0 & N/A \\
         & {\sf MWS-A} & 738 & 0 & 0 & 2.18 \\
        \midrule
        Year 2007/08 &{\sf MC} & 745 & 1325 & 14549 & 0.85 \\
        (748 graduates & {\sf MWP-A} & 739 & 520 & 1906 & 13.57 \\
        752 programs) & {\sf MWS-E} & 709 & 0 & 0 & N/A \\
         & {\sf MWS-A} & 702 & 0 & 0 & 2.01 \\
        \bottomrule
    \end{tabular}
    \label{table:sfas}
\end{table*}

\subsection{Finding large almost-stable weakly popular matchings}

While we could clearly demonstrate in our experiments that weakly popular matchings are desirable alternatives to weakly stable matchings when the size of the matching is crucial, we also detected significant instability among these matchings (particularly for {\sf MWP-E} and much less so for {\sf MWP-A}). However, at this point, it is not clear whether this is a necessary trade-off when optimising for size and weak popularity rather than stability. One piece of evidence against this is that, in the random experiments, we saw that for some parameter settings the maximum-size weakly stable and maximum-size weakly popular solutions have, on average, the same size, suggesting that no trade-off is necessary and that full stability is achievable. In an attempt to mitigate this instability issue as much as possible without sacrifices in size or weak popularity, we developed an extension of {\sf MWP-E}, which we will denote by {\sf MWP-MinBP}, to find a matching with a minimum number of blocking edges (or a slight variation that minimises blocking agents) among all maximum-size weakly popular matchings.

Let $I$ be an instance of \wpopOpt\ \nd{consisting of a simple graph (although we note below how to adjust this for parallel edges)} with edges $E$ and let $I'$ be the instance with edge duplications $E'$ derived from $I$ as described in Section \ref{sec:exactalgo}. Consider the following integer linear program (where $a(e)$ and $b(e)$ correspond to the $a$ and $b$ copies in $E'$ of any edge $e\in E$).

\[
\begin{aligned}
\text{maximize}\quad & (\vert U\vert\cdot\vert W\vert+1)\cdot\sum_{e\in E'} x_e - \sum_{e\in E} b_e \\
\text{subject to}\quad
& \sum_{e\in E'(v)} x_e \le 1
    \qquad\forall v\in U\cup W
    &&\\[6pt]
& \sum_{\substack{f\in E'(u)\\ e \,\ntriangleleft_u, f}} x_f
  + \sum_{\substack{f\in E'(w)\\ e \,\ntriangleleft_w, f}} x_f
  + x_e \ge 1
    \qquad\forall e=(u,w)\in E' \\[6pt]
    & \sum_{\substack{f\in E(u)\\ e \,\ntriangleleft_u, f}} (x_{a(f)}+x_{b(f)})
  + \sum_{\substack{f\in E(w)\\ e \,\ntriangleleft_w, f}} (x_{a(f)}+x_{b(f)})
  + x_{a(e)}+ x_{b(e)} + b_e\ge 1
    \;\;\;\forall e=(u,w)\in E \\[6pt]
& x_e\in\{0,1\}\qquad\forall e\in E'\\
& b_e\in\{0,1\}\qquad\forall e\in E.
\end{aligned}
\]

Notice that the only differences of this program compared to that for finding a maximum $\gamma$-popular matching are the additional constraints involving the $b_e$ variables, and a change of the objective function from maximize $\sum_{e\in E'} x_e$ to maximize $(\vert U\vert\cdot\vert W\vert+1)\cdot\sum_{e\in E'} x_e - \sum_{e\in E} b_e$ instead. Thus, clearly any $\gamma$-popular matching is a feasible matching for this program (e.g., set $b_e=1$ for all $e\in E$), and any feasible solution to this program is a $\gamma$-popular matching (by ignoring the $b_e$ variables). Now, furthermore, any optimal solution to this program is a maximum-size weakly popular matching, as 
$$\sum_{e\in E} b_e\leq \vert U\vert\cdot\vert W\vert<\vert U\vert\cdot\vert W\vert+1,$$
so maximizing the size, i.e., maximizing $\sum_{e\in E'} x_e$, remains the primary objective. Subject to finding a maximum-size weakly popular matching, we clearly minimise the number of blocking edges, as any maximum-size weakly popular matching with fewer blocking edges would also satisfy the constraints but have a larger objective function value. 

We also highlight that the program can be modified in a variety of ways. For example, \nd{if $I$ does not consist of a simple graph, the objective function must be changed by replacing $\vert U\vert\cdot \vert W\vert$ with $\vert E\vert$. W}e can also modify the program to find matchings with a minimum number of blocking agents instead: instead of defining the binary variable $b_e$ for all $e\in E$, we can define the binary variable $b_a$ for all $a\in U\cup W$. Then replace $b_e$ by $b_u+b_v$ in the constraint and impose the objective function maximize 
$$(\vert U\vert+\vert W\vert+1)\cdot\sum_{e\in E'} x_e - \sum_{a\in U\cup W} b_a$$ 
instead. By a similar argument as above, and noting that 
$$\sum_{a\in U\cup W} b_a\leq \vert U\vert+\vert W\vert<\vert U\vert+\vert W\vert+1,$$
This integer linear program characterises the set of maximum-size weakly popular matchings with a minimum number of blocking agents. Note that the formulation can be easily adjusted to only minimise blocking agents in $U$ or in $W$ instead, for example.

To investigate whether our initial observation turns out to be correct and there do indeed exist maximum-size weakly stable matchings that have significantly fewer blocking edges, we applied this modified ILP-based algorithm to the same instances analysed in Table \ref{table:blocking}. \nd{The comparison of {\sf MWP-MinBP} against {\sf MWP-E} is shown in Table \ref{table:blockingminbp}. Recall that both algorithms return maximum-size weakly popular matchings. }We can see in the table that {\sf MWP-MinBP} brings a significant advantage with respect to the average numbers of blocking edges: there are only few parameter settings where the average number is larger than 1 and only one setting where it is 2. {\sf MWP-E}, in contrast, returns matchings that contain up to 268.5 blocking edges per matching on average (in the worst setting). However, we note that there is a significant trade-off in computation time. For $n_*=50$ and $l_*=0.2$, {\sf MWP-MinBP} took only 0.48s (when $t_*=0.2$) or 0.75s (when $t_*=0.8$) per instance on average, but this increased steeply to 32.67s and 29.72s for $n_*=100$ and to 331.99s and 155.65s for $n_*=150$. For $l_*=0.8$, the times are much longer, starting at 16.21s ($t_*=0.2$) and 16.00s ($t_*=0.8$) per instance on average for $n_*=50$, increasing to 448.40s and 481.01s for $n_*=100$ and to 5038.24s and 5066.89s for $n_*=150$. We note, though, that while for $n_*\in\{50,100\}$ all computations terminated successfully, for $n_*=150$, 7.2\% (for $t_*=0.2$) and 37.3\% (for $t_*=0.8$) of instances did not finish within the 3h time limit, so the results are excluded from these calculations, as well as the summary statistics in Table \ref{table:blockingminbp}.

Hence, although {\sf MWP-MinBP} shows that \nd{maximising the size of weakly popular matchings need not come at a high cost of instability, the specific algorithm {\sf MWP-MinBP}} does not pose a feasible alternative when the number of agents is large, and the preference lists are long.

\begin{table}[!htb]
    \centering
    \renewcommand{\arraystretch}{.85}
    \caption{Comparison of blocking edges between ILP models}
    \begin{tabular}{c c c c c c c c}
        \toprule
        & & \multicolumn{3}{c}{$t_*=20\%$} & \multicolumn{3}{c}{$t_*=80\%$} \\
        \cmidrule(lr){3-5} \cmidrule(lr){6-8}
        
        &  & $\mathbf{n_*}=$ \textbf{50} & \textbf{100} & \textbf{150} & \textbf{50} & \textbf{100} & \textbf{150} \\
        \midrule
        \multirow{2}{*}{$l_*=20\%$}  
        & {\sf MWP-MinBP} & 1.93 & 1.42 & 0.43 & 0.50 & 0.82 & 2.00 \\
        & {\sf MWP-E} & 8.81 & 3.89 & 126.90 & 3.41 & 22.99 & 68.67 \\
        \midrule
        \multirow{2}{*}{$l_*=80\%$}  
        & {\sf MWP-MinBP} & 0.04 & 0.02 & 0.01 & 0.18 & 0.03 & 0.79 \\
        & {\sf MWP-E} & 32.99 & 55.25 & 59.39 & 35.59 & 134.73 & 268.50 \\
        \bottomrule
    \end{tabular}
    \label{table:blockingminbp}
\end{table}

\section{Conclusion and future work}
\label{sec:conclusion}

In this paper, we introduced \new{three} new natural notions of popularity in two-sided matching markets with ties. We have shown that a weakly popular and more generally a \cpop\ matching always exists, and we provided an algorithm that finds a $\frac{3}{4}$-approximation to the maximum-size \cpop\ matching problem. We also proved that the ratio of $\frac{3}{4}$ is tight assuming that either the strong-UGC or the Small Set Expansion Hypothesis holds. Furthermore, our extensive experimental evaluation on synthetic and real-world data confirmed the practical utility of weak popularity, showing that \new{weakly popular matchings can offer a significant advantage in matching size over stable matchings,} while maintaining a \new{high level of stability}. For super popularity, we have shown that even the existence problem is NP-hard. \nd{However, our proof only works for two-sided ties, so we leave the complexity with one-sided ties as an open question. It would also be nice to have a non-brute-force (exponential-time) algorithm deciding the existence of super popular matchings.}

% In the interest of anonymisation, please do not include acknowledgements in your submission.
%
%\begin{acks}
%
%
%\end{acks}

% Bibliography
\bibliographystyle{ACM-Reference-Format}
\bibliography{new-bibliography}

\end{document}